\newcommand{\fuzzy}{\ensuremath{\mathcal{N}}}
\newcommand{\zero}{\ensuremath{\mathcal{P}}}
\title{Games from Basic Data Structures}
\author{Mara Bovee \\University of New England\\bovee.mara@gmail.com \and Kyle Burke\\Plymouth State University\\ kgburke@plymouth.edu \and Craig Tennenhouse\\University of New England\\ctennenhouse@une.edu}
\begin{document}

\maketitle

\begin{abstract}
  In this paper, we consider combinatorial game rulesets based on data structures normally covered in an undergraduate Computer Science Data Structures course: arrays, stacks, queues, priority queues, sets, linked lists, and binary trees.  We describe many rulesets as well as computational and mathematical properties about them.  Two of the rulesets, \ruleset{Tower Nim} and \ruleset{Myopic Col}, are new.  We show polynomial-time solutions to \ruleset{Tower Nim} and to \ruleset{Myopic Col} on paths.
\end{abstract}

\section{Introduction}

\subsection{Notation}

\begin{itemize}
    \item For an integer $x$ we will use $x^-$ to denote an integer less than or equal to $x$, and $x^+$ an integer at least as large as $x$.

    \item We use the $\oplus$ operator to refer to the bitwise binary XOR operation.  For example, $3 \oplus 5 = 6$.

    \item Other notation common to combinatorial game theory is mentioned below.
\end{itemize}

\subsection{Combinatorial Game Theory}

Two-player games with alternating turns, perfect information, and no random elements are known as \emph{combinatorial games}.  Combinatorial Game Theory concerns determining which of the players has a winning move from any \emph{position} (game state).  Some important considerations here follow.

\begin{itemize}
    \item  This paper only considers games under \emph{normal play} conditions, meaning that the last player to make a move wins.  (If a player can't make a legal move on their turn, they lose.)
    \item  The sum of two game positions is also a game where a legal move consists of moving on one of the two position summands.  The value of a game sum is simply the sum of the values of the summands.
    \item  Many of the sections require the reader to be moderately comfortable with number values, nimbers (or Grundy values), and outcome classes (specifically, $\fuzzy$ and $\zero$).
    \item Understanding of the game-set notation (e.g. $\{*, 1 | *, 2\}$) is necessary for understanding Lemma \ref{lem:numbersPlusStar}.
\end{itemize}

The authors recommend \cite{LessonsInPlay:2007} and \cite{WinningWays:2001} for background texts on combinatorial game theory.

\subsection{Algorithmic Combinatorial Game Theory}

Algorithmic combinatorial game theory concerns the computational difficulty in determining the winnability, outcome class, or value of game positions.  Given a position, how long does it take for an algorithm to determine whether the next player has a winning strategy?  For many rulesets, this can be solved in time polynomial in the size of the representation of the position.  These ``easy'' rulesets usually allow some sort of closed formula to determine the winner.

For others, the problem is known to be intractable.  For example, solving \ruleset{Snort} is known to be \cclass{PSPACE}-complete\cite{DBLP:journals/jcss/Schaefer78}, meaning that unless an efficient algorithm is found to solve all other problems in \cclass{PSPACE}, no polynomial-time algorithm can determine the winnability of all \ruleset{Snort} positions.  For these ``hard'' games, usually the majority of the induced game tree must be inspected in order to determine the winner.

In algorithmic combinatorial game theory, a ruleset is considered \emph{solved} if either the polynomial time ``trick'' is known, or if it has been shown to be intractable.  Many rulesets remain unsolved: it is unknown whether they can be evaluated efficiently or not.

In this paper, we discuss known algorithmic results about each ruleset.

\subsection{Data Structures}

Data structures are logical structures used to store digital data in computing systems.  Different structures facilitate different operations.  For example, a max-heap can quickly add and remove elements, but is restricted to removing only the greatest element.

A data structures course in computer science is a common course in the basic programming sequence in many undergraduate curricula.  The topics cover many logical structures to store digital data.  This paper covers games based on: arrays, sets, stacks, queues, priority queues, linked lists, and binary trees.  We define each of these in the following sections.

\subsection{Organization}

The following 7 sections are each devoted to a ruleset based on a particular data structure.  We provide Table \ref{table:organization} as a handy key to corresponding data structures, ruleset names, and section numbers.

\begin{table}[h!]
	\begin{center}\begin{tabular}{|l|l|l|}
		\hline
		Data Structure & Ruleset & Section\\ \hline
		Array & \ruleset{Nim} & \ref{section:arrays} \\ \hline
		Sets & \ruleset{Antonim} & \ref{section:sets} \\ \hline		
		Stacks & \ruleset{Tower Nim} & \ref{section:stacks}\\ \hline
		Queues & \ruleset{Rotisserie Nim} & \ref{section:queues}\\ \hline
		Priority Queues & \ruleset{Greedy Nim} & \ref{section:priorityQueues} \\ \hline
		Linked Lists & \ruleset{Myopic Col} & \ref{section:linkedLists} \\ \hline
		Binary Trees & \ruleset{Myopic Col} & \ref{section:binary-trees} \\ \hline
		Graphs & many games & \ref{section:graphs} \\ \hline
	\end{tabular}\end{center}
	\caption{Corresponding data structures, rulesets, and sections.}\label{table:organization}
\end{table}

\section{Arrays}
\label{section:arrays}

An array is a linear data structure addressable by the natural numbers.  Each array element is stored in a cell so that the first $n$ elements have respective addresses $0, 1, \ldots, n-1$.  

\subsection{Ruleset: Nim}

\ruleset{Nim} is a simple combinatorial game that has been well-studied\cite{Bouton:1901}.  
\begin{definition}[Nim]
	\label{def:nim}
	\ruleset{Nim} is an impartial combinatorial game where each position is represented by a list of natural numbers: $K = (k_0, k_1, \ldots, k_{n-1})$, $k_i \in \mathbb{N}$.  The position represented by $M = (m_0, m_1, \ldots, m_{n-1})$ is an option of $K$ if $\exists i \in 0, 1, \ldots, n-1: $
	\begin{itemize}
		\item $\forall j \neq i: k_j = m_j$, and
		\item $m_i < k_i$
	\end{itemize} 
\end{definition}

The grundy value of any \ruleset{Nim} position can be found in linear time: $\mathcal{G}(k_0, k_1, \ldots, k_{n-1})= k_0 \oplus k_1 \oplus \cdots \oplus k_{n-1}$.  There is a winning move from position $K$ exactly when the grundy value, $\mathcal{G}(K)$, is not zero.\cite{Bouton:1901}

\section{Sets}
\label{section:sets}

A set is a data structure where there is no implicit order on the elements, and copies of elements are not recorded, (each element appears at most once).  

\subsection{Ruleset: Antonim}

Instead of keeping a list (or bag) of heaps as in \ruleset{Nim}, we employ a set.  In the resulting \ruleset{Antonim}, each turn consists of choosing a heap from the set, removing some number of sticks from that heap, then replacing the original heap with the new heap.  (If the resulting heap has no sticks, we don't bother to add it back.)  The catch here is that each heap is simply a natural number, so if there is already a heap of that size in the set, the new heap is forgotten about; that player might as well have taken all the sticks in that heap.  We provide a formal description in Definition \ref{def:antonim}.

\begin{definition}[Antonim]
	\label{def:antonim}
	\ruleset{Antonim} is an impartial combinatorial game where each position is represented by $S \subset \mathbb{N} \setminus \{0\}$.  The position represented by $S'$ is an option of $S$ if:
	\begin{itemize}
		\item $\exists x \notin S'$ such that $S = S' \cup \{x\}$, or
		\item $\exists x \in S, y \in S'$ such that $x \notin S'$ and $y < x$
	\end{itemize} 
\end{definition}

Finding the outcome class for  \ruleset{Antonim} has already been solved for positions with up to three piles \cite{WinningWays:2001}.  These rules are as follows:

\begin{itemize}
    \item $\{\}$ is a terminal position; so the position is in $\mathcal{P}$.
    \item One pile: $\{k\} \in \mathcal{N}$, as there is always a winning move (by taking all sticks).
    \item Two piles: $\{a, b\} \in 
        \begin{cases}
            \mathcal{P}, & \{a, b\} = \{2k+1, 2k+2\} \mbox{ for some } k \in \mathbb{N} \cr
            \mathcal{N}, & \mbox{otherwise}
        \end{cases}$
    \item Three piles: $\{a, b, c\} \in
        \begin{cases}
            \mathcal{P}, & (a+1) \oplus (b+1) \oplus (c+1)=0 \cr
            \mathcal{N}, & \mbox{otherwise}
        \end{cases}$
\end{itemize}

This last case can be proven inductively or by a reduction to standard Nim. A winning move is to add one stone to each heap, make an appropriate move in Nim, then remove a stone from each heap. The appropriate Nim move will not eliminate any heap entirely since that would require two equal-sized heaps to be present, which is not possible in Antonim. Play continues until the game is down to two heaps.





There is currently no known polynomial time algorithm for determine the outcome class for an Antonim position containing four or more heaps, and computer analysis has not yet shed light on a possible method.

\section{Stacks}
\label{section:stacks}

A stack is a data structure where elements are added and removed.  Each remove operation pulls out the most recently-added element.  We can represent this with a list where items are added and removed from the same end, called the top.  The remove operation (called \emph{pop}) then always fetches the most recently added (\emph{pushed}) element.

Consider the following stack (the top is on the right-hand side): $(5, 2, 4)$.  If we pop an element off the stack, $(5, 2)$ will remain.  If we then push $1$ followed by $6$, followed by another, pop, the resulting stack will be $(5, 2, 1)$.

\subsection{Ruleset: Tower Nim}

\ruleset{Tower Nim} is another spin-off of \ruleset{Nim}.  The difference here is that the heaps are arranged linearly.  A move consists of removing sticks from the top (or front) heap.  If all sticks in one heap are removed, then that heap is removed from the stack.  \ruleset{Tower Nim} is similar to \ruleset{End Nim} \cite{DBLP:journals/combinatorics/AlbertN01}
 but with the added restriction that only one end is accessible. We formalize this in Definition \ref{def:towerNim}.

\begin{definition}[\ruleset{Tower Nim}]
	\label{def:towerNim}
	\ruleset{Tower Nim} is an impartial ruleset where positions are described by a list of non-zero natural numbers: $L = (a_0, a_1, \ldots, a_{n-1}, a_n)$.  The position represented by $L'$ is an option of $L$ if either
	\begin{itemize}
		\item  $L' = (a_0, \ldots, a_{n-1})$, or
		\item  $\exists b \in \mathbb{N} \setminus \{0\}$ such that $b < a_n$ and $L' = (a_0, \ldots, a_{n-1}, b)$.
	\end{itemize}
\end{definition}

After a handful of games, it becomes clear that the winnability of the game is often determined by whether the top heap has 1 stick.

\begin{observation}
\label{obs:stackAllOnes}
 	For games with a stack of all ones, $(1, 1, \ldots, 1)$, the nimber is the xor-sum of all heaps.
\end{observation}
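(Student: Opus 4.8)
The plan is to prove this by a short induction on the number of heaps, exploiting the fact that a heap consisting of a single stick admits no stick-removal move at all.

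First I would establish that from a position $(1, 1, \ldots, 1)$ consisting of $k$ heaps, the \emph{only} legal option is to pop the top heap. Indeed, by Definition \ref{def:towerNim} the second kind of move requires some $b \in \mathbb{N} \setminus \{0\}$ with $b < a_n = 1$, which is impossible; hence the first kind of move (removing the top heap entirely) is forced. Consequently the whole game collapses to a single chain of positions $G_k \to G_{k-1} \to \cdots \to G_1 \to G_0$, where $G_j$ denotes the stack of $j$ ones and $G_0$ is the empty (terminal) stack.

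Next I would compute the Grundy values along this chain inductively. The base case is $\mathcal{G}(G_0) = 0$. For the inductive step, since $G_j$ has the unique option $G_{j-1}$, we get $\mathcal{G}(G_j) = \mathrm{mex}\{\mathcal{G}(G_{j-1})\}$, which simply toggles the value between $0$ and $1$ at each step; hence $\mathcal{G}(G_j) = j \bmod 2$.

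Finally I would observe that the xor-sum of $j$ ones is $1 \oplus 1 \oplus \cdots \oplus 1$ (with $j$ copies), which equals $0$ when $j$ is even and $1$ when $j$ is odd, i.e.\ exactly $j \bmod 2$; this matches the computed Grundy value and completes the argument. The only step needing any care is the first one, namely confirming that no stick-removal move is ever available on a singleton top heap, but this is immediate from the definition, so there is no genuine obstacle here — the result is really a warm-up observation preceding the harder general case.
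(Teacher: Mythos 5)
Your proof is correct: the key point that a top heap of size $1$ admits no reduction move (since no $b \in \mathbb{N} \setminus \{0\}$ satisfies $b < 1$) forces the game into a single chain, and the mex/parity computation matching the xor-sum of ones follows immediately. The paper states this as an observation and offers no proof of its own, and your argument is exactly the natural justification the authors evidently had in mind, so there is nothing to compare beyond noting that you have filled in the omitted details correctly.
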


\begin{lemma}
\label{lem:towerNimBigTop}
	When only the top is non-one, the position is always in \fuzzy.
\end{lemma}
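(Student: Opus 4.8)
The plan is to prove the lemma by exhibiting, for every such position, an explicit move to a \zero-position, which certifies that the current position lies in \fuzzy. I would first write the position as $(\underbrace{1, \ldots, 1}_{n}, t)$, where $t \geq 2$ is the non-one top heap and $n \geq 0$ counts the ones beneath it. By Definition~\ref{def:towerNim}, the only moves act on the top heap: either delete it entirely, reaching the all-ones stack of length $n$; or shrink it to any value $b$ with $1 \le b < t$. Of these shrinks I will use only the choice $b = 1$.

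The key fact I would invoke is the preceding Observation~\ref{obs:stackAllOnes}: an all-ones stack of length $k$ has nimber $k \bmod 2$, hence is a \zero-position exactly when $k$ is even. With this in hand the argument splits on the parity of $n$. If $n$ is even, deleting the top heap reaches the all-ones stack of length $n$, which is a \zero-position. If $n$ is odd, shrinking $t$ down to $1$ --- legal precisely because $t \ge 2$ --- reaches the all-ones stack of length $n+1$, again of even length and hence a \zero-position. In either case the mover lands in \zero, so the original position is in \fuzzy.

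There is no genuinely hard step here; once the Observation is granted, the proof is a one-line parity case analysis. The only points that require care are bookkeeping ones: checking that the ``shrink to $1$'' move is legal (it is, since a non-one heap satisfies $t \ge 2$), and confirming that the degenerate case $n = 0$ is covered. In that case the position is the single heap $(t)$, $n$ is even, and deleting the top heap empties the stack to the terminal \zero-position --- matching the intuitive winning move of simply taking all $t$ sticks. I would therefore state the hypothesis carefully, reading ``only the top is non-one'' as $a_0 = \cdots = a_{n-1} = 1$ with $a_n \ge 2$, so that both parity branches are available as described.
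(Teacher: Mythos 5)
Your proof is correct and is essentially identical to the paper's: both case-split on parity and use exactly the two moves ``take the whole top heap'' or ``shrink the top heap to $1$'' to land in an even-length all-ones stack, which is a \zero-position by Observation~\ref{obs:stackAllOnes}. The only cosmetic difference is that you parameterize by the number $n$ of ones beneath the top rather than by the total number of heaps, and you spell out the degenerate $n=0$ case, which the paper leaves implicit.
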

\begin{proof}
	Let $(1, 1, \ldots, 1, x)$ represent a position, $G$, of \ruleset{Tower Nim}, where $x > 1$.  There are two cases, based on the length of the list.
	\begin{itemize}
		\item If the position has an even number of heaps, then the next player can take $x-1$ sticks, reducing the game to a zero position.
		\item If the position has an odd number of heaps, then the next player can take all $x$ sticks in the top heap, reducing the number of sticks to an even amount, a zero position.
	\end{itemize}
	Since the current player always has a winning move, $G$ must be in \fuzzy.
\end{proof}

\begin{corollary}
\label{corol:towerOnesOnTop}
	Let $(\ldots, x, 1, \ldots, 1)$ represent a position, $G$, of \ruleset{Tower Nim}, where $x > 1$ and $n$ is the number of 1-heaps above $x$.  Then the nimber of $G$ is the parity of $n$.
\end{corollary}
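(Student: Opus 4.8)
The plan is to induct on $n$, exploiting the defining restriction of \ruleset{Tower Nim} that only the topmost heap may be touched. Write $G_n = (\ldots, x, 1, \ldots, 1)$ for the position with $x>1$ carrying exactly $n$ one-heaps above it. For $n \geq 1$ the top heap is a $1$, and by Definition \ref{def:towerNim} a heap of size $1$ cannot be reduced to a smaller positive value, so its only legal option is to be popped. Thus $G_n$ has the \emph{single} option $G_{n-1}$, and every game starting from $G_n$ is forced down the chain $G_n \to G_{n-1} \to \cdots \to G_0$ until the ones are exhausted. This unique-option structure is what I would build the whole argument on.

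Next I would translate the unique-option structure into Grundy values. A position with a lone option $O$ has $\mathcal{G} = \mathrm{mex}\{\mathcal{G}(O)\}$, which equals $1$ if $\mathcal{G}(O) = 0$ and $0$ if $\mathcal{G}(O) \neq 0$. Feeding this along the forced chain shows that for all $n \geq 1$ the value $\mathcal{G}(G_n)$ lies in $\{0,1\}$ and flips at each step, so $\mathcal{G}(G_n)$ is determined entirely by $\mathcal{G}(G_0)$ together with the parity of $n$. This is exactly the mechanism that makes the answer a function of $n \bmod 2$; the only things left to pin down are the starting value and the resulting alignment of parities.

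The base case is the step I expect to be the crux. I would compute $\mathcal{G}(G_0)$, the value of $(\ldots, x)$ with $x>1$ on top, directly from its options: popping $x$ reaches the lower stack $(\ldots)$, while reductions reach $(\ldots, b)$ for $1 \leq b < x$. A clean way to anchor it is a dichotomy on $\mathcal{G}(\ldots)$: if $\mathcal{G}(\ldots) = 0$ then popping $x$ is a move to a $\zero$ position, and if $\mathcal{G}(\ldots) \neq 0$ then $(\ldots,1)$ has the single option $(\ldots)$ and hence Grundy value $0$, so reducing $x$ to $1$ is a move to a $\zero$ position; either way $G_0 \in \fuzzy$, matching Lemma \ref{lem:towerNimBigTop}. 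Knowing $\mathcal{G}(G_0) \neq 0$ fixes the alternation, and the delicate part is then confirming that the resulting parity of $\mathcal{G}(G_n)$ agrees with the stated ``parity of $n$'' — I would sanity-check the smallest cases (say $n=1$, e.g.\ $(x,1)$, whose only option is $(x)$) to be sure the indexing and any off-by-one are lined up correctly.
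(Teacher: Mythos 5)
Your proposal is correct and rests on the same core idea as the paper's one-sentence proof: the $n$ one-heaps force the play, so the position chains down to $(\ldots,x)$, which is a first-player win. But your write-up supplies two things the paper glosses over, and both are genuinely needed. First, the paper only observes that after $n$ forced turns one reaches $(\ldots,x)\in\fuzzy$; by itself that settles outcome classes, not nimbers, and your mex recurrence $\mathcal{G}(G_n)=\mathrm{mex}\{\mathcal{G}(G_{n-1})\}$ is exactly what converts the forced chain into a statement about Grundy values. Second, the paper implicitly justifies $(\ldots,x)\in\fuzzy$ via Lemma \ref{lem:towerNimBigTop}, but that lemma is stated and proved only for stacks of the form $(1,\ldots,1,x)$, whereas here the heaps below $x$ are arbitrary; your self-contained dichotomy (pop $x$ entirely if the remaining stack has Grundy value $0$; otherwise reduce $x$ to $1$, producing a position whose unique option has nonzero Grundy value) proves the general fact that is actually being used, so it is not merely a restatement of the lemma.

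The sanity check you defer does need to be carried out, and it resolves \emph{against} the statement's literal wording: from $\mathcal{G}(G_0)\neq 0$ your recurrence gives $\mathcal{G}(G_1)=0$, $\mathcal{G}(G_2)=1$, and in general $\mathcal{G}(G_n)=0$ for odd $n$ and $\mathcal{G}(G_n)=1$ for even $n\geq 2$ --- the reverse of ``nimber equals the parity of $n$'' under the usual convention that parity of an odd number is $1$ (and at $n=0$ the nimber need not equal $1$ at all, only be nonzero). Your computation, not the literal statement, is the right one: it matches how the paper itself applies this corollary later, in the proof that \ruleset{Tower Nim} is in \cclass{P}, where the conclusion drawn is $G\in\fuzzy$ iff the number of ones above the non-one heap is even. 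So read the corollary as ``$G\in\zero$ iff $n$ is odd,'' and your argument, with the $n=1$ check $(x,1)\to(x)$ giving $\mathrm{mex}\{x\}=0$, proves exactly that.
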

\begin{proof}
	After $n$ turns, the current position must be represented by $(\ldots, x)$, which is in \fuzzy.  
\end{proof}

\begin{lemma}
	Let $(\ldots, 1, x)$ represent a position, $G$, of \ruleset{Tower Nim}.  There are two cases for the nimber of $G$:
	\begin{itemize}
		\item[$x = 1$] Then either $G = 0$ or $G = *$.
		\item[$x \geq 2$] Then $G = *x$.
	\end{itemize}
\end{lemma}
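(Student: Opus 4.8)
The plan is to induct on the size $x$ of the top heap, writing the position as $G = (S, 1, x)$, where $S$ denotes the (possibly empty) list of heaps lying below the distinguished $1$. Since the only playable heap is the top one, the options of $G$ are exactly the removal $(S,1)$ together with each reduction $(S, 1, b)$ for $1 \le b < x$. I would compute $\mathcal{G}(G)$ as the mex of the Grundy values of these options, so the whole argument reduces to understanding those option values.

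First I would dispose of the base case $x = 1$. The position $(S, 1, 1)$ has a top heap of size $1$, which cannot be reduced (a move must leave a nonzero heap), so its unique option is the removal $(S, 1)$. Hence $\mathcal{G}(S,1,1) = \mathrm{mex}\{\mathcal{G}(S,1)\}$, and since the mex of a single value is always $0$ or $1$, we get $G = 0$ or $G = *$ with no induction needed.

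The crucial step, and the one I expect to be the main obstacle, is to show that the two ``small'' options $(S,1)$ and $(S,1,1)$ always contribute exactly the Grundy values $0$ and $1$, in some order, regardless of $S$. For this I would apply the same single-option observation twice: $(S,1)$ has unique option $S$, so $\mathcal{G}(S,1) = \mathrm{mex}\{\mathcal{G}(S)\} \in \{0,1\}$; and $(S,1,1)$ has unique option $(S,1)$, so $\mathcal{G}(S,1,1) = \mathrm{mex}\{\mathcal{G}(S,1)\}$ is the complement of $\mathcal{G}(S,1)$ within $\{0,1\}$. Thus $\{\mathcal{G}(S,1), \mathcal{G}(S,1,1)\} = \{0,1\}$ whether $\mathcal{G}(S,1)$ equals $0$ or $1$. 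Pinning down that this pair is \emph{always} $\{0,1\}$ --- rather than its being dependent on the unknown structure of $S$ --- is what makes the final mex come out cleanly.

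Finally I would assemble the inductive step for $x \ge 2$. By the inductive hypothesis each option $(S, 1, b)$ with $2 \le b \le x-1$ has $\mathcal{G} = b$, covering the values $\{2, 3, \ldots, x-1\}$; the options $(S,1)$ (removal) and $(S,1,1)$ (reduction to $b = 1$) supply $\{0,1\}$ by the previous paragraph. Hence the option values are exactly $\{0, 1, \ldots, x-1\}$, all strictly less than $x$, so $\mathcal{G}(G) = \mathrm{mex}\{0,1,\ldots,x-1\} = x$, i.e.\ $G = *x$. The $x = 2$ case is simply the degenerate instance where the range $\{2,\ldots,x-1\}$ is empty and only the $\{0,1\}$ pair remains, which already forces $\mathcal{G}(G) = 2$.
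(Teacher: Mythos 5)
Your proof is correct and follows essentially the same route as the paper's: induction on $x$, computing the nimber as a mex once the options $(S,1)$, $(S,1,1)$, \ldots, $(S,1,x-1)$ are seen to carry exactly the values $\{0, 1, \ldots, x-1\}$. The only difference is that where the paper cites its earlier corollary (parity of the run of $1$-heaps on top) to supply the values $0$ and $*$ for the two smallest options, you derive them directly from the unique-option/mex observation, which makes your version self-contained.
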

\begin{proof}
	The first case is taken care of for us, using the previous corollary.
	The second case can be proven by strong induction on $x$.  In the base case, $G$ has both $0$ and $*$ and options, by the previous corollary, and must equal $*2$.  For the inductive case, when $x = k+1$, it has all the children that $x = k$ has $(0, *, *2, \ldots, *(k-1))$, as well as $*k$, but does not include $*(k+1)$.  Thus, by the mex rule, $G = *(k+1)$.
\end{proof}

\begin{corollary}
	Let $(\ldots, y, x)$, where $y \geq 2$, represent a position, $G$, of \ruleset{Tower Nim}.  There are two cases for the nimber of $G$:
	\begin{itemize}
		\item[$x = 1$] Then $G = 0 (G \in \zero)$.
		\item[$x \geq 2$] Then $G \in \fuzzy$.
	\end{itemize}
\end{corollary}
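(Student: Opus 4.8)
The plan is to reduce both cases to a single principle $(\star)$: every \ruleset{Tower Nim} position whose top heap holds at least two sticks lies in \fuzzy. Granting $(\star)$, Case 2 is immediate, since there the top heap is $x \geq 2$. For Case 1 the position $(\ldots, y, 1)$ has a single legal option, obtained by popping the lone top stick to reach $(\ldots, y)$; because $y \geq 2$, $(\star)$ puts $(\ldots, y)$ in \fuzzy, and a position all of whose options lie in \fuzzy\ must itself lie in \zero. So the entire content of the corollary is concentrated in establishing $(\star)$.

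To prove $(\star)$ I would fix a position $G$ with top heap $t \geq 2$, write $H$ for the position obtained by deleting that top heap, and examine just two of $G$'s options: removing the whole top heap (yielding $H$) and shaving the top heap down to a single stick (yielding $(H, 1)$; this move is legal precisely because $t \geq 2$). The key observation is that attaching a single $1$-heap inverts the outcome class: the position $(H, 1)$ has $H$ as its \emph{unique} option, so $(H, 1) \in \zero$ exactly when $H \in \fuzzy$. Splitting on the outcome of $H$ then finishes it: if $H \in \zero$, then $G$ moves directly to the \zero-position $H$; if $H \in \fuzzy$, then $G$ moves to the \zero-position $(H, 1)$. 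In either case $G$ has a move to a \zero-position, so $G \in \fuzzy$.

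I expect the only real obstacle to be stating $(\star)$ at the right level of generality and verifying the outcome-inversion step cleanly; everything else is bookkeeping with the normal-play rule that the \zero-positions are exactly those all of whose options lie in \fuzzy. One should also double-check the degenerate boundary: in Case 1 popping the top genuinely is the only option (no $b$ with $1 \leq b < 1$ exists), and in $(\star)$ the ``shave to $1$'' move requires $t \geq 2$, which is exactly the hypothesis under which it is invoked. I would remark that $(\star)$ is slightly stronger than, and subsumes, the $x \geq 2$ conclusion of the preceding lemma, so no appeal to the $*x$ nimber computation is actually required to prove this corollary.
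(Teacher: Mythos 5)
Your proof is correct, but it takes a genuinely different route from the paper's. The paper disposes of the corollary in two sentences: for $x = 1$, the unique option $(\ldots, y)$ is non-zero ``as described in the previous lemma'' (the one computing $(\ldots, 1, x) = *x$); for $x \geq 2$, the $x = 1$ case is an option, so $G \in \fuzzy$. Your argument inverts this structure: you prove the standalone principle $(\star)$ (top heap at least $2$ implies \fuzzy) and derive both cases from it, whereas the paper derives its $x \geq 2$ case from its $x = 1$ case, and its $x = 1$ case from the preceding lemma. Your route is arguably tighter, because the paper's citation has a small gap that your $(\star)$ repairs: the unique option $(\ldots, y)$ of $(\ldots, y, 1)$ need not have a $1$ in second-from-top position, so the cited lemma does not literally describe it; closing that hole requires either an implicit induction or precisely your $(\star)$. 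Your proof of $(\star)$ --- pop the whole top heap to reach $H$, or shave it to one stick to reach $(H, 1)$, and note that $(H, 1)$ has $H$ as its unique option, so appending a $1$-heap inverts the outcome class --- is self-contained, needs no nimber computations, requires no induction beyond the determinacy of outcome classes, and in fact generalizes Lemma \ref{lem:towerNimBigTop} from all-ones stacks to arbitrary stacks beneath the top. What each approach buys: the paper's chain keeps exact nimber values in play, consistent with the surrounding results that track them for the final polynomial-time algorithm; yours yields only outcome classes, but is more elementary and airtight. One nitpick: your closing remark that $(\star)$ ``subsumes'' the preceding lemma's $x \geq 2$ conclusion overstates things, since that lemma pins down the exact value $*x$, which is strictly more information than $G \in \fuzzy$; but because the present corollary asserts only outcome classes (for impartial positions, $G = 0$ and $G \in \zero$ are synonymous), $(\star)$ is indeed all you need here.
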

\begin{proof}
	The first case occurs because the only option is a non-zero position as described in the previous lemma.  The second case occurs because the first case is an option.
\end{proof}

With these results, we can determine the outcome class of any position by counting the number of consecutive size-1 heaps on top of the stack and whether there is a bigger heap underneath the consecutive 1-heaps. 

\begin{corollary}[\ruleset{Tower Nim} is in \cclass{P}]
    The outcome class of any \ruleset{Tower Nim} position can be determined in \cclass{P}.
\end{corollary}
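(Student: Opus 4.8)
The plan is to exhibit an explicit algorithm that reads the position once from the top of the stack and then reports the outcome class by matching it against the classifications already established. Given a position $L = (a_0, \ldots, a_n)$, I would scan the heaps starting from the top $a_n$ downward, counting the number $c$ of consecutive heaps equal to $1$ and stopping as soon as a heap exceeding $1$ is encountered or the bottom of the stack is reached. This single pass also records whether a heap larger than $1$ lies beneath the block of $1$-heaps, which is exactly the data that the preceding lemmas and corollaries say determines the nimber, and hence the outcome class (recall that an impartial position is in \zero\ precisely when its nimber is $0$).

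Correctness follows by checking that these possibilities are exhaustive and that each is covered by a prior result. If every heap equals $1$, Observation~\ref{obs:stackAllOnes} gives the nimber as the xor-sum, i.e. the parity of the number of heaps, so the position is in \zero\ exactly when that count is even. If the top heap exceeds $1$ (so $c = 0$), then the preceding lemma on positions $(\ldots, 1, x)$ and corollary on positions $(\ldots, y, x)$ place $G$ in \fuzzy\ regardless of what lies beneath. Finally, if the top block consists of $c \geq 1$ heaps of size $1$ sitting above a heap larger than $1$, Corollary~\ref{corol:towerOnesOnTop} gives the nimber as the parity of $c$, so the position is in \zero\ iff $c$ is even. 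Since every position matches exactly one of these three shapes, the classification is complete.

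For the running time, the scan visits each heap at most once and performs only a comparison against $1$ at each step, so the work is linear in the number of heaps and in the size of the representation; thus the outcome class is computed in \cclass{P}. I expect the only real obstacle to be a bookkeeping one: verifying that the three shapes above genuinely partition all positions and that each maps to a stated result without a gap. In particular, I would confirm that the $c = 0$ case is uniformly in \fuzzy\ whether the heap immediately below the top is $1$, larger than $1$, or absent, which requires citing both the lemma on $(\ldots, 1, x)$ and the corollary on $(\ldots, y, x)$, together with the trivial single-heap subcase covered by Lemma~\ref{lem:towerNimBigTop}.
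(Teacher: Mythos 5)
Your algorithm and case decomposition are exactly the paper's: one linear scan that either finds all heaps equal to $1$ (handled by Observation \ref{obs:stackAllOnes}) or finds the topmost heap exceeding $1$ with $c$ consecutive ones above it. But your third case states the classification backwards, and this is a genuine error, not a matter of convention. When $c \geq 1$ ones sit on top of a heap larger than $1$, the position is in \zero\ exactly when $c$ is \emph{odd}, not even. The first $c$ moves are forced (each of those top heaps has a single stick), and after them the position has a heap exceeding $1$ on top, which is in \fuzzy\ by the results you cite in your second case; so the player who moves at that point wins, and that is the original mover precisely when $c$ is even. Concretely, $(2,1)$ has $c = 1$ and its only option is $(2) \in \fuzzy$, so $(2,1) \in \zero$, whereas your rule (``in \zero\ iff $c$ is even'') declares it to be in \fuzzy. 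The paper's own proof uses the correct direction: $G \in \fuzzy$ if $k$ is even and $G \in \zero$ if $k$ is odd, where $k$ counts the ones above the topmost non-one heap.

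The likely source of the mistake is that you read Corollary \ref{corol:towerOnesOnTop} literally: its statement (``the nimber of $G$ is the parity of $n$'') is itself written with the parity inverted relative to its own one-line proof, which shows that after $n$ forced moves one reaches a \fuzzy-position. Your own two cases should also have flagged the problem: the correct rule is uniform in $c$ ($G \in \fuzzy$ iff $c$ is even, with $c = 0$ meaning the top heap already exceeds $1$), whereas your case $c = 0$ yields \fuzzy\ while your rule for even $c \geq 1$ yields \zero, an inconsistency at the boundary. Note that the membership-in-\cclass{P} claim survives either way, since a parity is computable in linear time; but as written your algorithm returns the wrong outcome class on every position whose top block of ones has odd (or even) length, so the proof needs the parity corrected before it establishes the corollary.
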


\begin{proof}
    Consider a \ruleset{Tower Nim} position $G = (x_{n-1}, \ldots, x_1, x_0)$.  Then:
    \begin{itemize}
        \item if $\forall i: x_i = 1$ then, by Observation \ref{obs:stackAllOnes} $G = 
            \begin{cases}
                0\ (G \in \zero), & n \mbox{ is even}\cr
                *\ (G \in \fuzzy), & n \mbox{ is odd}
            \end{cases}$
        \item otherwise, let $x_k$ be the non-one heap closest to the top.  Now, by Corollary \ref{corol:towerOnesOnTop} $G \in
            \begin{cases}
                \fuzzy, & k \mbox{ is even}\cr
                \zero, & k \mbox{ is odd}
            \end{cases}$
    \end{itemize}
\end{proof}

\section{Queues}
\label{section:queues}

Queues are another data structure very similar to stacks.  The only difference is that the remove operation extracts the least-recently added element.  We can represent this as a list:
queues push elements to the \emph{back} of the list and pop from the \emph{front}, the opposite end.

Consider the following queue (the front is the left-hand side, the back is on the right): $(5, 2, 4)$.  If we pop an element from the queue, $(2, 4)$ will remain.  If we then push $1$ followed by $6$, followed by another, pop, the resulting queue will be $(4, 1, 6)$.

\subsection{Ruleset: Rotisserie Nim}

\ruleset{Rotisserie Nim} is the queue-counterpart to \ruleset{Tower Nim}.  In this game, the next player removes sticks from the heap in the front of the queue.  Then, if there are any sticks left in that heap, it's queued back in to the back of the line.  We formalize this in Definition \ref{def:rotisserieNim}. This is equivalent to \ruleset{Adjacent Nim}, which is itself a special case of \ruleset{Vertex Nim} played on a directed cycle\cite{DBLP:journals/tcs/DucheneR14}. 


\begin{definition}[\ruleset{Rotisserie Nim}]
	\label{def:rotisserieNim}
	\ruleset{Rotisserie Nim} is an impartial ruleset where positions are described by a list of non-zero natural numbers: $L = (a_0, a_1, \ldots, a_n)$.  The position represented by $L'$ is an option of $L$ if either
	\begin{itemize}
		\item  $L' = (a_1, \ldots, a_n)$, or
		\item  $\exists b \in \mathbb{N} \setminus \{0\}$ such that $b < a_0$ and $L' = (a_1, \ldots, a_n, b)$.
	\end{itemize}
\end{definition}


We begin by characterizing two heap games.

\begin{theorem}\label{thm:adjTwoHeaps}
If $L=(a_0,a_1)$, then $L\in \mathcal{N}$ iff $a_0>a_1$.
\end{theorem}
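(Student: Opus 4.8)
The plan is to prove the biconditional by strong induction on the total number of sticks $a_0 + a_1$. First I would record the two kinds of options available from a two-heap position $(a_0,a_1)$: the ``take-all'' move to the single heap $(a_1)$, and, for each $b$ with $0 < b < a_0$, the ``partial'' move to $(a_1, b)$. The single-heap position $(a_1)$ is always in $\mathcal{N}$ (the mover takes everything and leaves the empty terminal position), so that option is never useful to a player trying to reach a $\mathcal{P}$-position; this fact will be used in both the base case and the converse direction.

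For the forward direction, assuming $a_0 > a_1$, the key move is to \emph{equalize}: take exactly $a_0 - a_1$ sticks, which is legal because $0 < a_1 < a_0$, landing in $(a_1,a_1)$. This target has strictly smaller total, $2a_1 < a_0 + a_1$, and equal heaps, so by the induction hypothesis it lies in $\mathcal{P}$. Exhibiting this single winning move certifies $(a_0,a_1) \in \mathcal{N}$.

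For the converse, assuming $a_0 \leq a_1$, I would show every option lies in $\mathcal{N}$. The take-all option yields the single heap $(a_1) \in \mathcal{N}$. Every partial option $(a_1,b)$ satisfies $b \leq a_0 - 1 < a_0 \leq a_1$, so its front heap strictly exceeds its back heap; since its total $a_1 + b < a_0 + a_1$ is smaller, the induction hypothesis (forward direction) places it in $\mathcal{N}$. With all options in $\mathcal{N}$, the position is in $\mathcal{P}$, as claimed. The boundary case $a_0 = 1$ is automatic here, since then the partial moves vanish and the lone option is $(a_1) \in \mathcal{N}$.

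I expect no deep obstacle: the one thing to get right is the bookkeeping that keeps the induction well-founded, namely that $b < a_0$ forces both the equalizing target $(a_1,a_1)$ and every partial option $(a_1,b)$ to strictly decrease $a_0 + a_1$. The only mild care is handling the $a_0 = 1$ boundary where partial moves disappear, and confirming the equalizing move is always legal under the $a_0 > a_1$ hypothesis.
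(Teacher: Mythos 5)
Your proof is correct and takes essentially the same approach as the paper: the winning move when $a_0 > a_1$ is the equalizing move to $(a_1,a_1)$, and when $a_0 \leq a_1$ every option --- the single heap $(a_1)$ or $(a_1,b)$ with $b < a_0 \leq a_1$ --- is an $\mathcal{N}$-position, forcing $L \in \mathcal{P}$. The only difference is presentational: the paper certifies $(a_1,a_1) \in \mathcal{P}$ by an explicit mirroring strategy (``match the opponent's moves until reaching the terminal position''), whereas you obtain it from the strong induction hypothesis on the total number of sticks, which is a clean formalization of the same idea.
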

\begin{proof}
First, assume $a_0>a_1$. The next player can move to the position $(a_1,a_1)$, and continue to match the other player's moves until reaching the terminal position. Next, assume that $a_1\geq a_0$. The only move from $L$ is to a position of the form $(a_1,a_0^-)$. Since $a_0^-<a_0\leq a_1$, this is an $\mathcal{N}$-position by the above argument, and therefore $L\in \mathcal{P}$.
\end{proof}

The following theorem, proven in \cite{DBLP:journals/tcs/DucheneR14}, completely characterizes the classes $\mathcal{P}$ and $\mathcal{N}$ where each heap has size at least two.

\begin{theorem}\label{thm:adjNim}
Let $L = (a_0, a_1, \ldots, a_n),a_i\geq 2,  \forall 0\leq i\leq n$ be a position in \ruleset{Rotisserie Nim}, and let $a_-=\min\{a_i\}_0^n$. Then $L\in \mathcal{N}$ iff $n$ is odd or the smallest index $j$ for which $a_j=a_-$ is even.
\end{theorem}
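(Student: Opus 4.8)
The plan is to prove Theorem~\ref{thm:adjNim} by strong induction on the total number of sticks $\sum_{i=0}^{n} a_i$, taking Theorem~\ref{thm:adjTwoHeaps} as the base case once play has reduced the list to two heaps. I would prove both directions at once: that every position meeting the stated condition ($n$ odd, or the least $j$ with $a_j=a_-$ even) has an option lying in $\mathcal{P}$, and that every position failing it sends all of its options into $\mathcal{N}$.

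The first, and I expect hardest, step is choosing the right family to induct over. The hypothesis $a_i \geq 2$ is \emph{not} closed under options: reducing the front heap to $1$ and requeuing it produces a $1$-heap, which behaves specially, since a front $1$-heap can only be deleted and so forces a parity change with no count-preserving alternative. I would therefore state and prove a single master characterization valid for all lists of positive heaps, phrased again in terms of $n$ and the first-minimum index $j$ but with the $1$-heaps folded in, and recover Theorem~\ref{thm:adjNim} as its restriction to the all-$\geq 2$ case. Getting this general invariant exactly right, so that it is genuinely preserved by both move types, is the crux.

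Next I would record how the two move types act on the pair $(n,j)$. A reduce-and-requeue move removes the heap at index $0$ and appends a smaller positive value at the back, cyclically shifting every surviving index down by one while keeping the heap count fixed; a delete move performs the same shift but drops a heap, lowering $n$ by one. Tracking the first-minimum index through these shifts is routine when the front heap strictly exceeds $a_-$, but delicate when the front heap itself equals $a_-$, since then both the minimum value and its location move. I would isolate this \emph{front-equals-minimum} case as a separate computation of the transformed pair $(n',j')$.

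With that machinery the $\mathcal{N}$ direction becomes: in each triggering case, exhibit one move, typically deleting the front heap or reducing it to $a_-$, whose transformed $(n',j')$ falls outside the condition, so that the resulting position is in $\mathcal{P}$ by the inductive hypothesis; deep in the game this is sealed by the two-heap analysis of Theorem~\ref{thm:adjTwoHeaps}, which lets the winner mirror the opponent down to the terminal position. The $\mathcal{P}$ direction is the contrapositive sweep over all options: starting from $n$ even and $j$ odd, I take an arbitrary delete or reduce, apply the transformation rules, and verify the result meets the $\mathcal{N}$-condition, hence is in $\mathcal{N}$ by induction. In both directions the only subtle branch is a move played on a heap equal to $a_-$, handled by the front-equals-minimum lemma.
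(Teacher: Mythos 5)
First, a note on the comparison target: the paper never proves Theorem~\ref{thm:adjNim} at all---it is imported verbatim from Duch\^{e}ne and Renault's paper on \ruleset{Vertex Nim}---so your attempt has to stand on its own merits, and it does not. You correctly diagnose the central obstruction (the all-$\geq 2$ hypothesis is not closed under options, since reducing the front heap to $1$ and requeuing it creates a $1$-heap), but you then defer exactly that difficulty instead of resolving it: the ``master characterization valid for all lists of positive heaps'' on which the whole induction rests is never stated, and you yourself flag finding it as the crux. Worse, in the form you commit to---``phrased again in terms of $n$ and the first-minimum index $j$''---no such characterization exists. Take $(2,3,1)$ and $(3,4,2)$: both have three heaps and first minimum at index $j=2$, yet $(2,3,1)\in\mathcal{P}$ (both of its options, $(3,1)$ and $(3,1,1)$, are in $\mathcal{N}$; this also follows from the paper's Theorem~\ref{thm:adjThreeHeaps2}), while $(3,4,2)\in\mathcal{N}$ (it has the winning move to $(4,2,1)$, which is in $\mathcal{P}$ since all four of its options are in $\mathcal{N}$). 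So once $1$-heaps are admitted, the outcome is not a function of $(n,j)$; the \emph{placement} of the $1$-heaps carries essential information, and pinning down how is closely related to the paper's first open problem (the complexity of general \ruleset{Rotisserie Nim}). This is precisely why the paper's own partial results for positions with $1$-heaps (Theorems~\ref{thm:adjThreeHeaps1} and~\ref{thm:adjThreeHeaps2}, via the strategy Lemmas~\ref{lem:adjStrategy} and~\ref{lem:adjCompare}) proceed by case-by-case strategy arguments rather than by a closed-form invariant. Without either such a strategy argument tailored to the positions that actually arise under optimal play, or a correct (and currently unknown) master invariant, your induction cannot close.

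There is a second, independent problem: the statement as printed is incompatible with the very base case you build on. For $L=(2,3)$ we have $n=1$ odd, so Theorem~\ref{thm:adjNim} as written places $L\in\mathcal{N}$; but Theorem~\ref{thm:adjTwoHeaps} (and direct computation: both options $(3)$ and $(3,1)$ are in $\mathcal{N}$) gives $L\in\mathcal{P}$. This is a zero-based versus one-based indexing slip relative to the cited source; in the paper's notation the condition should read ``$n$ is even, or the smallest index $j$ with $a_j=a_-$ is odd.'' Because your plan only ever speaks abstractly of the transformed pair $(n',j')$ ``falling outside the condition,'' it would never collide with a concrete parity and so would not detect this; a completed proof along your lines would have to correct the statement first, or its $\mathcal{P}$-direction collapses already at two heaps.
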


We extend this to some small positions with heaps of size one. 

\begin{theorem}\label{thm:adjThreeHeaps1}
If $a_0=1$ then $L\in \mathcal{P}$ if $a_1>a_2$.
\end{theorem}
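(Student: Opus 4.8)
The plan is to exploit the fact that a front heap of size one is almost immovable: it cannot be partially reduced, so it forces a unique move, and that move drops us into a two-heap game already classified by Theorem~\ref{thm:adjTwoHeaps}. Here $L = (a_0, a_1, a_2) = (1, a_1, a_2)$ is a three-heap position (as the title of the statement and the appearance of only $a_0, a_1, a_2$ indicate).

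First I would examine the options of $L$ directly from Definition~\ref{def:rotisserieNim}. The second bullet requires a positive integer $b$ with $b < a_0 = 1$; since $b \in \mathbb{N}\setminus\{0\}$, no such $b$ exists, so that bullet contributes nothing. Thus the \emph{only} option of $L$ comes from the first bullet: discard the size-one front heap to reach $L' = (a_1, a_2)$.

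Next I would apply Theorem~\ref{thm:adjTwoHeaps} to the unique option $(a_1, a_2)$ (relabeling its heaps so that the front heap is $a_1$ and the back heap is $a_2$). That theorem gives $(a_1, a_2) \in \mathcal{N}$ precisely when $a_1 > a_2$. Under the hypothesis $a_1 > a_2$, the single option therefore lies in $\mathcal{N}$. Since every option of $L$ is an $\mathcal{N}$-position, the next player has no move to a $\mathcal{P}$-position, and hence $L \in \mathcal{P}$.

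There is essentially no real obstacle to overcome; the whole argument rests on the one observation that a size-one front heap admits a forced move. The only point requiring care is confirming that $L$ is genuinely a three-heap position, so that its sole option is exactly the two-heap game covered by Theorem~\ref{thm:adjTwoHeaps} rather than a longer list to which that theorem would not apply.
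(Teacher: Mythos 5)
Your proposal is correct and follows essentially the same route as the paper's proof: both observe that a size-one front heap forces the unique move to $(a_1,a_2)$, which is an $\mathcal{N}$-position by Theorem~\ref{thm:adjTwoHeaps} when $a_1 > a_2$, so $L \in \mathcal{P}$. Your version merely spells out the details (why the second bullet of Definition~\ref{def:rotisserieNim} yields no option) that the paper leaves implicit.
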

\begin{proof}
Assume $a_1>a_2$ and $a_0=1$. The only valid move is to $(a_1,a_2)$ which is in $\mathcal{N}$ by Theorem \ref{thm:adjTwoHeaps}. 
\end{proof}

In order to complete the characterization of all three heap positions in \ruleset{Rotisserie Nim} we must first prove the following lemma.

\begin{lemma}\label{lem:adjStrategy}
Let $L=(a_0,a_1,\ldots ,a_n)$ is a position in \ruleset{Rotisserie Nim}. If $(a_1,\ldots ,a_n)\in \mathcal{N}$ then:
\begin{itemize}
	\item If $(n+1)$ is odd and $L\in \mathcal{N}$ then $(a_1,\ldots ,a_n, 1)\in \mathcal{P}$, and 
	\item If $(n+1)$ is even and $L\in \mathcal{N}$ then $(a_1,\ldots ,a_n, a_0-1)\in \mathcal{P}$.
\end{itemize}
\end{lemma}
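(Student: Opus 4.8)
The plan is to prove each bullet directly by showing that the indicated position has every option in $\mathcal{N}$, so that it lies in $\mathcal{P}$. Write $M = (a_1, \ldots, a_n)$, so that the options of $L$ are exactly $M$ together with the appended positions $(M, b) = (a_1, \ldots, a_n, b)$ for $1 \le b \le a_0 - 1$. The hypotheses do real work here: since $M \in \mathcal{N}$, the move $L \to M$ is losing, so the assumption $L \in \mathcal{N}$ forces $a_0 \ge 2$ (otherwise $M$ is the only option and $L \in \mathcal{P}$) and guarantees that at least one appended option $(M,b)$ is in $\mathcal{P}$. Without $L \in \mathcal{N}$ the conclusion genuinely fails: for instance $(a_0, a_1, 1)$ with $a_1 \ge 2$ is in $\mathcal{P}$, and then $(M,1) = (a_1, 1, 1)$ is in $\mathcal{N}$, not $\mathcal{P}$, even though $M = (a_1,1) \in \mathcal{N}$ and $n+1 = 3$ is odd. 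I would carry out the argument by strong induction on the total number of stones, using Theorem~\ref{thm:adjTwoHeaps} as the base case and Theorem~\ref{thm:adjNim} to dispatch any sub-position all of whose heaps are at least two.

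For the even case I would show $(M, a_0-1) \in \mathcal{P}$ via an equalizing strategy for the responder: the appended heap $a_0 - 1$ is chosen large enough that, whatever the opponent does to the front heap, the responder can restore a balanced configuration. The base case $n+1 = 2$ is exactly Theorem~\ref{thm:adjTwoHeaps}, since $(a_1, a_0-1) \in \mathcal{P} \iff a_1 \le a_0 - 1 \iff a_0 > a_1 \iff L \in \mathcal{N}$; I would then propagate this upward through the induction. For the odd case I would show $(M, 1) \in \mathcal{P}$ by checking the two kinds of options: to an opponent move that re-queues (takes from the front and appends a heap $c$), the responder deletes the new front heap, exposing a position ending in a size-one heap that is forced into $\mathcal{P}$ (the prototype being $(1,c) \in \mathcal{P}$ for every $c \ge 1$); to an opponent move that deletes the front heap, the responder appeals to the two-heap theorem or to the inductive hypothesis. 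The small case $n+1 = 3$, where $(a_1, a_2, 1)$ with $a_1 > a_2 \ge 2$ is checked by hand to be in $\mathcal{P}$, is the model for the general step.

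The main obstacle is the re-queueing move, which takes stones from the front and appends a heap while leaving the heap count at $n+1$; such moves do not reduce the number of heaps, so a naive induction on heap count stalls on exactly this case. I would handle it by exhibiting an explicit winning reply, namely ``delete the current front heap,'' and proving that this reply always lands in $\mathcal{P}$. Verifying that claim in full generality, rather than in the $n+1 \le 3$ cases one can check directly, is the technical heart of the lemma, and it is where the parity hypothesis on $n+1$ and the structure inherited from $M \in \mathcal{N}$ and $L \in \mathcal{N}$ are actually consumed. I expect that making this step go through cleanly will require strengthening the induction hypothesis, carrying along a companion invariant describing positions that contain a size-one heap at a prescribed distance from the front, and proving that invariant simultaneously with the two bullets by mutual induction.
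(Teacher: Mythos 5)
Your preliminary analysis is correct and genuinely useful: since $M=(a_1,\ldots,a_n)\in\mathcal{N}$, the deletion move $L\to M$ is losing, so $L\in\mathcal{N}$ forces $a_0\ge 2$ and guarantees that \emph{some} re-queueing option $(M,b)$ with $1\le b\le a_0-1$ lies in $\mathcal{P}$; and your example $(a_0,a_1,1)$ with $a_0,a_1\ge 2$ correctly shows the hypothesis $L\in\mathcal{N}$ cannot be dropped. But the proposal has a genuine gap, one you concede yourself: everything hinges on the claim that after an opponent's re-queueing move from $(M,1)$ the reply ``delete the new front heap'' lands in $\mathcal{P}$, and this claim is never proved. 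Your induction on the total number of stones does not close on its own, because the options of $(M,1)$, and the positions reached after your proposed replies, need not satisfy the lemma's hypotheses (a tail in $\mathcal{N}$ and the whole position in $\mathcal{N}$), so the inductive hypothesis does not apply to them; the ``companion invariant'' about a size-one heap at a prescribed distance from the front, which you correctly identify as the fix, is never formulated, let alone proved simultaneously. A plan that names the technical heart and defers it is not a proof.

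The idea you are missing---and the reason the paper's proof is two sentences long---is that you should not try to verify membership in $\mathcal{P}$ from scratch. You already observed that some re-queueing move $(M,b)$ wins; the lemma only asks you to show that the \emph{designated} re-queueing move dominates all the others, and that is a parity statement about who re-encounters the front heap. If $n+1$ is even, the player who moves on heap $a_0$ is the same player who will face its remnant when it cycles back to the front, so removing more than one stick now confers no advantage: any sticks you would have taken can still be taken on that later visit. Hence if any $(M,b)\in\mathcal{P}$ then $(M,a_0-1)\in\mathcal{P}$. If $n+1$ is odd, the remnant is faced by the opponent, so leaving more than a single stick can only help them; hence if any $(M,b)\in\mathcal{P}$ then $(M,1)\in\mathcal{P}$. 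This exchange/dominance argument replaces your entire induction and is exactly where the parity hypothesis is consumed. To be fair, the paper's own write-up leaves this dominance at an intuitive level (for instance, it does not address how intervening heap deletions perturb the cycle parity), so a fully rigorous version still takes work---but it is work of a different and much more tractable kind than the option-by-option induction you propose.
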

\begin{proof}
If $L$ has an even number of heaps, then the same player who encounters heap $a_0$ will also encounter what remains of this heap on a subsequent turn. Therefore, if removing the entire heap does not move the game into a $\mathcal{P}$-position then there is no advantage to removing more than a single stick. Similarly, with an odd number of heaps there is no advantage to leaving more than a single stick for the other player to encounter on a subsequent turn.
\end{proof}

Lemma \ref{lem:adjStrategy} provides an interesting strategy, which leads to a win of if $L\in \mathcal{N}$. Using this strategy we prove another lemma necessary to complete our characterization of three heap games. 

\begin{lemma}\label{lem:adjCompare}
If $L=(a_0,a_1,\ldots)\in \mathcal{P}$ then so is $L'=(a_0^-,a_1^+,a_2^-,a_3^+,\ldots)$.
\end{lemma}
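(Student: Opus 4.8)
The plan is to prove this lemma simultaneously with a dual statement by strong induction on the total number of sticks. Define two transforms of a position $(a_0, a_1, a_2, \ldots)$: call $(a_0^-, a_1^+, a_2^-, a_3^+, \ldots)$ the \emph{P-transform} (this is exactly the $L \mapsto L'$ in the statement), and call $(a_0^+, a_1^-, a_2^+, a_3^-, \ldots)$ the \emph{N-transform}, which does the opposite. I would prove together: (A) if $L \in \mathcal{P}$ then its P-transform lies in $\mathcal{P}$, and (B) if $L \in \mathcal{N}$ then its N-transform lies in $\mathcal{N}$. Treating them as one induction is what makes the argument close up, since each reduces to the other on smaller positions.

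The engine is a correspondence between the options of a transformed position and the transforms of the options of the original. The key observation is that discarding the front heap (or requeuing what remains of it) shifts every surviving heap's index down by one, which flips even/odd roles and therefore interchanges the two transforms. For (A), I would show every option of the P-transform of $L$ is in $\mathcal{N}$: such an option either discards the front heap, giving the N-transform of the tail $(a_1, \ldots, a_n)$, or requeues some $b < a_0^-$, giving the N-transform of $(a_1, \ldots, a_n, b)$ with the last heap kept at value $b$. Both $(a_1, \ldots, a_n)$ and $(a_1, \ldots, a_n, b)$ are options of $L$, hence lie in $\mathcal{N}$ because $L \in \mathcal{P}$, and each has strictly fewer sticks than $L$; so the inductive form of (B) places each option in $\mathcal{N}$, giving $L' \in \mathcal{P}$. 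Claim (B) is symmetric: from a winning $L$-move to some $L_0 \in \mathcal{P}$ I would build the matching option of the N-transform, recognize it as the P-transform of $L_0$, and invoke the inductive form of (A). The base cases (the empty position for (A), a single heap for (B)) are immediate.

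I expect the main obstacle to be the parity and index bookkeeping, and especially the handling of the requeued heap. Two points need care: first, that a front-removal genuinely swaps the two transforms, so that the resulting sign pattern $(a_1^+, a_2^-, a_3^+, \ldots)$ really is an N-transform of the tail; and second, that the requeued heap can be assigned the \emph{same} value $b$ in both games, which makes its $\pm$ designation consistent no matter whether its new index is even or odd. Finally I would check that the induction is well-founded: although the transforms can enlarge heaps, the induction variable is the total number of sticks in the untransformed base position, which strictly decreases on passing to any option, so the argument never recurses onto a larger instance.
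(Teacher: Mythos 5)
Your proposal is correct, and its engine---the correspondence between options of $L'$ and transforms of options of $L$, keyed to the observation that consuming the front heap shifts every index down by one and so interchanges the two sign patterns---is the same mechanism that drives the paper's proof. The packaging, however, is genuinely different. The paper argues by contradiction with a strategy-copying argument: if $L'\in\mathcal{N}$, its winning move to $(a_1^+,a_2^-,\ldots,a_0^{--})$ can be copied from $L$ (legal since $a_0^-\leq a_0$), and ``each move to a position in $\mathcal{P}$ from $L'$ can be copied by the first player in $L$,'' whence $L\in\mathcal{N}$, a contradiction. As written this is terse: to conclude that the copied move from $L$ is actually winning, one needs precisely the dual fact that $(a_1,a_2,\ldots,a_0^{--})$, which sits under the $\mathcal{P}$-position $(a_1^+,a_2^-,\ldots,a_0^{--})$ in the flipped pattern, is itself in $\mathcal{P}$---i.e., an instance of the lemma with the transforms interchanged---or else a game-long mirroring invariant that the paper only gestures at. Your simultaneous induction on the pair (A)/(B) is exactly the bookkeeping that makes this rigorous: your statement (B) is the dual fact the paper leaves implicit, your note that the requeued heap $b$ can be kept identical in both games resolves the parity ambiguity at the back of the queue, and your choice of the untransformed position's stick count as the induction variable settles well-foundedness, which the contradiction version never addresses (and which is delicate, since the transforms can enlarge heaps). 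What the paper's route buys is brevity; what yours buys is a complete and checkable argument.
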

\begin{proof}
Assume to the contrary that $L\in \mathcal{P}$ and $L'\in \mathcal{N}$. Therefore there is a move from $L'$ to a $\mathcal{P}$-position $(a_1^+,a_2^-,a_3^+,\ldots ,a_0^{--})$. That means that a similar move can take $L$ to $(a_1,a_2,a_3,\ldots, a_0^{--})$ since $a_0^-\leq a_0$. In fact, each move to a position in $\mathcal{P}$ from the starting position of $L'$ can be copied by the first player in the starting position $L$, and hence $L\in \mathcal{N}$. This contradicts our assumption, hence $L'\in \mathcal{P}$ as well.
\end{proof}

\begin{theorem}\label{thm:adjThreeHeaps2}
If $a_0>1$ then $L\in \mathcal{P}$ iff $a_1>1$ and $a_2=1$.
\end{theorem}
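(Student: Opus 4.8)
The plan is to prove both directions of the biconditional directly from the two-heap characterization (Theorem \ref{thm:adjTwoHeaps}), without any induction on three-heap positions: I will exhibit an explicit winning move in every $\mathcal{N}$ case and check that all options lie in $\mathcal{N}$ in the $\mathcal{P}$ case. Throughout I use that the options of $L=(a_0,a_1,a_2)$ with $a_0>1$ are the two-heap position $(a_1,a_2)$ together with the three-heap positions $(a_1,a_2,b)$ for $1\le b\le a_0-1$.

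For the sufficiency direction, I assume $a_1>1$ and $a_2=1$, so $L=(a_0,a_1,1)$, and show every option is in $\mathcal{N}$. The option $(a_1,1)$ is in $\mathcal{N}$ by Theorem \ref{thm:adjTwoHeaps} since $a_1>1$. For the remaining options $(a_1,1,b)$, rather than appealing to the theorem recursively I observe that each has the single move (deleting its front heap) to $(1,b)$, and $(1,b)\in\mathcal{P}$ by Theorem \ref{thm:adjTwoHeaps} because $1\le b$; hence every $(a_1,1,b)\in\mathcal{N}$. As all options are in $\mathcal{N}$, the position $(a_0,a_1,1)$ is in $\mathcal{P}$.

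For the converse I assume $a_0>1$ but not both $a_1>1$ and $a_2=1$, i.e. $a_1=1$ or $a_2\ge 2$, and produce one $\mathcal{P}$-option. If $a_1\le a_2$, then $(a_1,a_2)\in\mathcal{P}$ by Theorem \ref{thm:adjTwoHeaps}, and deleting the front heap wins. Otherwise $a_1>a_2$, which forces $a_1\ge 2$; since $a_1\ne 1$, the standing hypothesis leaves $a_2\ge 2$. Reducing the front heap to a single stick (legal because $a_0\ge 2$) then yields $(a_1,a_2,1)$ with $a_1>1$ and $a_2>1$, which is in $\mathcal{P}$ by the sufficiency direction already established. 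In either subcase $L\in\mathcal{N}$, completing the biconditional.

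The step I expect to be the main obstacle is handling the options $(a_1,a_2,b)$ that are themselves three-heap positions, since a naive argument would invoke the very theorem being proved; the resolution is the observation that these options each reduce in one move to a two-heap $\mathcal{P}$-position $(1,b)$, so the whole argument bottoms out in Theorem \ref{thm:adjTwoHeaps} and no induction is needed. The other point requiring care is the case analysis in the converse: I must correctly negate the right-hand side, and notice that the subcase $a_1>a_2$ forces $a_1\ge 2$ so that the hypothesis pins down $a_2\ge 2$, which is exactly what licenses reusing the forward direction on $(a_1,a_2,1)$. (Lemma \ref{lem:adjCompare} offers an alternative way to batch the ``all options are in $\mathcal{N}$'' verification, but the direct route above makes it unnecessary here.)
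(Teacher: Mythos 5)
Your proof is correct, but it takes a genuinely different route from the paper's. The paper proves the forward (sufficiency) direction in two stages: it first establishes $(a_0,2,1)\in\mathcal{P}$ by way of Lemma \ref{lem:adjStrategy} (if that position were in $\mathcal{N}$, the strategy lemma would force the winning move to be to $(2,1,1)$, which visibly has a move to $(1,1)\in\mathcal{P}$ and so is in $\mathcal{N}$, a contradiction), and then extends to all $(a_0,a_1,1)$ with $a_1\geq 2$ via the monotonicity Lemma \ref{lem:adjCompare}; its necessity argument likewise traces play using Lemma \ref{lem:adjStrategy}. You bypass both lemmas entirely: your sufficiency direction is a direct check that every option of $(a_0,a_1,1)$ lies in $\mathcal{N}$, bottoming out only in Theorem \ref{thm:adjTwoHeaps}, and your necessity direction reuses that already-proven forward direction to certify the move to $(a_1,a_2,1)$. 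Your route is more elementary and self-contained, and your case split $a_1\leq a_2$ versus $a_1>a_2$ is exhaustive, whereas the paper's split into ``$a_1>a_2$'' and ``$a_1<a_2$'' silently skips $a_1=a_2$ (easily repaired, but a gap); what the paper's approach buys is leverage from Lemma \ref{lem:adjStrategy} and Lemma \ref{lem:adjCompare}, machinery it has developed anyway and which generalizes beyond three heaps.

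One inaccuracy you should repair, though it does not break the proof: you assert that each option $(a_1,1,b)$ ``has the single move (deleting its front heap) to $(1,b)$.'' That is false as stated, since $a_1>1$ means $(a_1,1,b)$ also has moves to $(1,b,c)$ for every $1\leq c<a_1$. Fortunately, to conclude $(a_1,1,b)\in\mathcal{N}$ you only need the \emph{existence} of one move to a $\mathcal{P}$-position, and the move to $(1,b)\in\mathcal{P}$ supplies it; rephrase ``the single move'' as ``a move'' and the argument is exact.
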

\begin{proof}
Assume $L=(a_0,2,1)$. Since $(2,1)\in \mathcal{N}$ by Theorem \ref{thm:adjTwoHeaps}, Lemma \ref{lem:adjStrategy} tells us that the first move should be to the position $(2,1,1)$. The second player can then move to $(1,1)$, which is in $\mathcal{P}$. Therefore, $L\in \mathcal{P}$. By Lemma \ref{lem:adjCompare}, $(a_0,a_1,1)\in \mathcal{P}$ whenever $a_2\geq 2$.

Next, assume $a_0,a_2>1$ and $a_1>a_2$. Because $(a_1,a_2)\in \mathcal{N}$ by Theorem \ref{thm:adjTwoHeaps}, by Lemma \ref{lem:adjStrategy} the first player will move to position $(a_1,a_2,1)$. Similarly, because $(a_2,a)\in \mathcal{N}$ the next player will move to $(a_2,1,1)$. The first player can then move to $(1,1)\in \mathcal{P}$. Therefore, $(a_0,a_1,a_2)\in \mathcal{N}$ when $a_0,a_2>1$ and $a_1>a_2$, and hence when $a_0>1$, the position $(a_0,a_1,a_2)\in \mathcal{P}$ iff $a_2=1$. 

Finally, consider the case where $a_0>1$ and $a_1<a_2$. Since $(a_1,a_2)\in \mathcal{P}$ by Theorem \ref{thm:adjTwoHeaps}, t$(a_0,a_1,a_2)\in \mathcal{N}$. Therefore, if $a_0>1$ then $(a_0,a_1,a_2)\in \mathcal{P}$ iff $a_1>1$ and $a_2=1$.
\end{proof}

\section{Priority Queues}
\label{section:priorityQueues}

Priority queues are similar to stacks and queues, except that the order of elements does not necessarily determine the order they will be removed.  When a remove-operation is called, the largest element is always removed.

Consider the following priority queue: $(5, 2, 4)$.  If we remove an element from the queue, $(2, 4)$ will remain.  If we then add $1$ followed by $6$, followed by another, remove, the resulting priority queue will be $(2, 4, 1)$.

\subsection{Ruleset: Greedy Nim}

\ruleset{Greedy Nim} is a ruleset just like \ruleset{Nim}, except that players must choose from a heap with the greatest size to remove sticks from\cite{WinningWays:2001}.  We give a formal version in Definition \ref{def:greedyNim}.


\begin{definition}[Greedy Nim]
	\label{def:greedyNim}
	\ruleset{Greedy Nim} is an impartial ruleset where positions are described by a multiset of non-zero natural numbers, $M$.  $M'$ is an option of $M$ exactly when $\exists x \in M: \forall y \in M: x \geq y$ and $\#_M(x) = 1 + \#_{M'}(x)$ and either:
	\begin{itemize}
		\item $\forall a \in (M \cup M')$ either $x = a$ or $\#_M(a) = \#_{M'}(a)$, or
		\item $\exists b \in M \setminus \{x\}: \#_M(b) + 1 = \#_{M'}(b)$ and $\forall a (\neq b) \in (M \cup M')$ either $x = a$ or $\#_M(a) = \#_{M'}(a)$.
	\end{itemize}
\end{definition}

This game has already been solved; a polynomial-time algorithm exists to determine the outcome class of the game\cite{MR2056015}.

\begin{theorem}[Greedy Nim]
	A \ruleset{Greedy Nim} position, $G$, is in \zero exactly when there are an even number of heaps with the greatest size.
\end{theorem}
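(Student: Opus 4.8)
The plan is to characterize the $\zero$-positions of \ruleset{Greedy Nim} via a direct strategy-stealing/parity argument, establishing the claim by showing that positions with an even number of maximum-size heaps are exactly those from which every move leads to an odd-count position, and vice versa. Let me write $m$ for the greatest heap size in a position and let $c$ denote the number of heaps equal to $m$. I would prove the theorem by strong induction on the total number of sticks, showing simultaneously that (i) if $c$ is even then the position is in $\zero$, and (ii) if $c$ is odd then the position is in $\fuzzy$.

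First I would handle the $\fuzzy$ direction (odd $c$): I must exhibit a winning move to a $\zero$-position. The natural move is to take \emph{all} sticks from one maximum heap. If there were at least two maximum heaps ($c \geq 2$, so $c \geq 3$ since $c$ is odd), removing one leaves $c - 1$ heaps of size $m$, which is even, and the new maximum is still $m$; by induction this child is in $\zero$. If instead $c = 1$, the unique maximum heap is strictly larger than all others; here the winning move is to reduce that heap down to match the second-largest size $m'$, so that the number of heaps of the new maximum $m'$ becomes even (the old single maximum now joins the existing $m'$-heaps, flipping their parity to even). The bookkeeping is that reducing the lone max to the next size down changes the top count from whatever it was to one-more, and I must verify this yields an even count — this is the step requiring the most care.

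For the $\zero$ direction (even $c$), I would argue that \emph{every} legal move from such a position lands in an odd-$c$ position, which by the inductive hypothesis is in $\fuzzy$, so the even-$c$ position has all options in $\fuzzy$ and is therefore in $\zero$. A move must act on a maximum heap (by the \ruleset{Greedy Nim} restriction). If the move reduces a max heap to a size still below $m$ but still above all non-max heaps, or removes it entirely, the top count $c$ drops by one to an odd number. If the move reduces a max heap to a value equal to some lower heaps, then one must re-examine what the new maximum is: if $c \geq 2$ the maximum stays $m$ with count $c-1$ (odd), and if $c$ was even and at least $2$ this is immediate; the only subtlety is when the single reduced heap creates a new tie at a lower level, but since $c$ is even and $c \geq 2$, there remain other heaps at height $m$, so the maximum is unchanged. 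The main obstacle, and the place I would spend the most effort, is the careful case analysis tracking how the count of maximum-size heaps changes when a reduction lands exactly at the second-largest value — ensuring in every sub-case that an even count necessarily becomes odd and an odd count admits a move to even. Once this parity-flip invariant is nailed down across all move types, the induction closes and the theorem follows.
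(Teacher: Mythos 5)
Your overall strategy is sound, and it is worth noting that the paper itself never proves this theorem: it simply cites Albert and Nowakowski's \emph{Nim restrictions} paper, so there is no in-paper argument to compare against. Your induction on the total number of sticks, with the parity of the count $c$ of maximum-size heaps as the key invariant, is the standard and correct skeleton. Your $\zero$ direction is essentially complete: if $c$ is even and positive, any legal move acts on a heap of maximum size $m$, and since $c \geq 2$ at least one $m$-heap survives the move, so every option has maximum $m$ with count $c-1$, which is odd; by induction every option is in $\fuzzy$, so the position is in $\zero$. The worry you raise about a reduced heap creating ties at lower levels is indeed a non-issue, for exactly the reason you give.

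The genuine gap is in the $\fuzzy$ direction when $c = 1$, and it is precisely the spot you flagged but did not resolve --- and as stated, your proposed move is wrong, not merely unverified. Let $m'$ be the second-largest heap size and $c'$ the number of heaps of that size. Reducing the lone maximum heap to size $m'$ produces $c' + 1$ heaps of the new maximum $m'$; this is even if and only if $c'$ is \emph{odd}. So your parenthetical claim that the old maximum ``joins the existing $m'$-heaps, flipping their parity to even'' fails whenever $c'$ is even, and in that case the move you recommend hands your opponent a position with an odd number of maximum heaps, i.e., a winning position by your own inductive hypothesis. The repair is a three-way split on the rest of the position: if there are no other heaps, take the whole heap, reaching the empty position, which is in $\zero$; if $c'$ is odd, reduce the maximum heap to $m'$, giving $c' + 1$ (even) heaps of the new maximum; if $c'$ is even, remove the maximum heap entirely, so the new maximum is $m'$ with count $c'$ (even). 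Each case lands in a $\zero$-position by induction, and with that one correction the whole argument closes.
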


\section{Linked Lists}
\label{section:linkedLists}

A linked list is a data structure with more explicit structure than a stack, queue, or priority queue.  Instead of describing which element gets removed, a linked list is a method for organizing elements.  Indeed, stacks and queues can both be implemented using linked lists.

Each element in a linked list exists inside a list node, which also includes a reference to the next list node.  (We consider only singly-linked lists, meaning each node does not have a reference to the previous node.)  In other terms, a linked list is like a directed path graph, with a value at each vertex.

\subsection{Ruleset: Myopic Col}\label{sec:myopiccol}\label{sec:myopiccol}\label{sec:myopiccol}\label{sec:myopiccol}\label{sec:myopiccol}

There are many choices of rulesets that could be used on directed paths.  In order to deviate from impartial games, we use a variant of \ruleset{Col}, \ruleset{Myopic Col}.  

In \ruleset{Col}, a turn consists of a player choosing an uncolored vertex and painting it their own color (blue or red).  However, they not allowed to choose a vertex adjacent to another vertex with their color.

\ruleset{Myopic Col} is played on a directed graph instead of undirected.  When a player chooses a vertex, they are only restricted from choosing vertices with arcs pointing \emph{to} neighboring vertices of their color.  Arcs directed towards the chosen vertex do not need to be considered.

As with other games discussed, we provide a rigorous definition.

\begin{definition}[Myopic Col]
	\label{def:myopicCol}
	\ruleset{Myopic Col} is a combinatorial game where positions are described by a directed graph, $G = (V, E)$, and a coloring of vertices either uncolored or painted red or blue, $(c: V \rightarrow \{uncolored, red, blue\})$.  Each arc, $(a, b) \in E$ points from $a$ to $b$.  Another graph-coloring pair $(G', c')$ is an option of $(G, c)$ for player $A \in \{red, blue\}$ exactly when both
	\begin{itemize}
	  \item $G' = G$, and
	  \item $\exists v \in V:$ 
	  \begin{itemize}
	    \item $\forall v' \in V \setminus \{v\}: c(v') = c'(v')$, and
	    \item $c(v) = uncolored$ and $c'(v) = A$ and $\nexists (v,b) \in E: c(b) = A$
	   \end{itemize}
	\end{itemize}
\end{definition}

The positions we consider in this section are on paths. ...

Consider a positions on a paths described as a list of the colors.  For example, $(blue, red, uncolored, uncolored, blue)$ represents path with five vertices with arcs going from-left-to-right.

Note that the game on a path with an arc from a colored vertex to another vertex is equivalent to the sum of two paths created by removing that arc.  Thus, we must only consider situations where there is at most one colored vertex, and that vertex is at the end of the path.

Before we prove things about the Col paths, we need a general lemma about games with numeric and star values:

\begin{lemma}
\label{lem:numbersPlusStar}
  $\forall x \in \mathbb{R}:$
  \begin{align}
    x + * = \{x | x\}\\
    x = \{x + * | x + *\}
  \end{align}
\end{lemma}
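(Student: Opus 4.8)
The plan is to treat the two equalities separately, in both cases working from the definition of a game sum together with two standard facts about $*=\{0\mid 0\}$: that it is infinitesimal (for every positive number $p$ one has $-p<*<p$) and that $*+*=0$. Throughout I read $x$ as a number in the game sense, so that its options satisfy $x^L<x<x^R$; these are the only properties of $x$ I will use.

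For the first equality I would write $x=\{x^L\mid x^R\}$ and expand $x+*$ by the usual rule for the options of a sum. Since the only option of $*$ is $0$, the Left options of $x+*$ are the positions $x^L+*$ together with $x+0=x$, and symmetrically the Right options are the positions $x^R+*$ together with $x$. The key step is to show that the options $x^L+*$ and $x^R+*$ are dominated. Because $x-x^L$ is a positive number it exceeds the infinitesimal $*$, giving $x^L+*<x$; likewise $x<x^R+*$. Hence for Left the option $x^L+*$ is dominated by $x$, and for Right the option $x^R+*$ is dominated by $x$. Deleting dominated options does not change the value, so $x+*=\{x\mid x\}$.

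For the second equality I would deliberately avoid a direct reversibility argument: in $\{x+*\mid x+*\}$ each option $x+*$ would have to reverse through its $x$-option, but the inequalities needed to justify the reversal ($x\le\{x+*\mid x+*\}$ and $x\ge\{x+*\mid x+*\}$) are together exactly the conclusion, so that route is circular. Instead I route through the first equality. Using $x+*=\{x\mid x\}$ as a representative, the sole Left and Right option of $x+*$ is $x$. Expanding $(x+*)+*$ by the option rule then produces the Left options $(x+*)^L+*=x+*$ and $(x+*)+0=x+*$, so every Left option equals $x+*$, and by symmetry every Right option equals $x+*$. Therefore $(x+*)+*=\{x+*\mid x+*\}$. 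Since $*+*=0$, the left-hand side is simply $x$, and we conclude $\{x+*\mid x+*\}=x$.

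The main obstacle is the domination step in the first part: everything hinges on the fact that a positive number gap cannot be bridged by $*$ (i.e.\ $p>*$ for every positive number $p$), which is the precise sense in which $*$ is infinitesimal and which I would either cite from the background texts or verify by a short outcome analysis of $p+*$. Once that is in hand, the first equality is routine, and the second falls out cleanly from it via the collapse of the options of $(x+*)+*$ and the identity $*+*=0$, so the real work and the real risk of error both sit in correctly handling the infinitesimal comparison.
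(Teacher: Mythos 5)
Your proof is correct, but it takes a genuinely different route from the paper on both halves. For the first identity, the paper simply recognizes $\{x \mid x\}$ as the switch $x \pm 0$ and applies number translation: $x \pm 0 = x + \{0 \mid 0\} = x + *$. You instead expand $x + *$ by the sum rule and delete the dominated options $x^L + *$ and $x^R + *$, which amounts to reproving the translation theorem in this special case from more primitive facts: that $*$ is infinitesimal and that dominated options may be removed. For the second identity, the paper argues directly that $\{x + * \mid x + *\} - x = 0$: any move on the non-number component leaves $x + * - x = * \in \mathcal{N}$, a loss for the mover, and number avoidance says neither player can do better by moving in $-x$. You avoid that outcome analysis entirely and derive the second identity from the first: using the form $\{x \mid x\}$ for $x + *$, every option of $(x+*) + *$ collapses in value to $x + *$, so $(x+*)+*$ equals $\{x+* \mid x+*\}$, while the left side is $x$ because $* + * = 0$. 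Your route is more elementary and self-contained (it needs only the sum rule, domination, the comparison $p > *$ for positive numbers $p$, and $*+*=0$), and it turns the second identity into a clean corollary of the first, whereas the paper's two parts are independent and lean on established machinery (switch arithmetic and the number avoidance theorem), including one informal step (``it is still better for both players to make that losing move''). Two small things to make explicit if you write this up: your domination step presumes $x$ is written in a form with $x^L < x < x^R$ (the canonical form of a number, with the vacuous cases for integers handled trivially), and your second argument silently uses the standard substitution principle that replacing options by equal-valued games preserves the value of a position.
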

\begin{proof}
  First we prove $(1)$: 
  \begin{align*}
    \{ x | x \} & = x \pm 0 & \mbox{(by the definition of switches)} \\
    & = x + \{0|0\}\\
    & = x + *
  \end{align*}
  
  Next we prove $(2)$ by showing that $\{x + * | x + *\} - x = 0$. Notice that either player loses by playing on $\{x + * | x + *\}$, because $x + * - x = * \in \outcomeClass{N}$.  However, since $- x$ is a number and $\{x + * | x + *\}$ is not, it is still better for both players to make that losing move.  Thus, the sum is a loss for both players and equals $0$.
\end{proof}

\begin{lemma}\label{lem:pathMyopicColValues}
  Let $G$ be a Myopic Col game played on a path with $n$ uncolored vertices, possibly followed by a colored vertex, $v$.  Then $G = 
  \begin{cases}
    n \times *, & \mbox{no } v\\
    (n-1) \times * - 1, & v\mbox{ is blue}\\
    (n-1) \times * + 1, & v\mbox{ is red}
  \end{cases}$
  
\end{lemma}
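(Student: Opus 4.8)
The plan is to run a strong induction on $n$, the number of uncolored vertices, handling the three position types together. Write $A_n$, $B_n$, and $R_n$ for the values of the path with $n$ uncolored vertices, with a trailing blue vertex, and with a trailing red vertex, respectively (so the claim is $A_n = n\times *$, $B_n = (n-1)\times * - 1$, $R_n = (n-1)\times * + 1$). First I would set up the recursions. Since a vertex is restricted only through its rightward arc, Blue may color an uncolored $u_i$ unless the vertex immediately to its right is already blue, and symmetrically for Red; the trailing colored vertex is the rightmost, so it has no outgoing arc and is never split off. When a player colors an interior $u_i$, its outgoing arc now leaves a colored vertex, so by the decomposition remark the position splits as a sum of the initial segment $u_1,\dots,u_i$ (with $u_i$ now colored) and the tail after $u_i$. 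Thus a Blue move on $u_i$ in $A_n$ produces $B_{i-1}+A_{n-i}$, a Red move produces $R_{i-1}+A_{n-i}$, and analogous lists arise for $B_n$ (where Blue is barred from $u_n$); the endpoint move $i=n$ is subsumed by the same formulas under the conventions $A_0=B_0=R_0=0$. I would also record the color-reversal symmetry $R_n=-B_n$, which lets me drop the red case and prove only the $A_n$ and $B_n$ formulas.

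Next I would feed the inductive hypothesis into these option lists. Substituting the values for smaller indices and using $k\times * + m\times * = (k+m)\times *$, the distinct Left options of $A_n$ become $(n-1)\times *$ and $(n-2)\times * - 1$, while its distinct Right options become $(n-1)\times *$ and $(n-2)\times * + 1$. Each comparison needed to discard a dominated option is just a nimber weighed against a nimber-plus-nonzero-number, whose sign is fixed by the integer part, so the $\pm 1$ options are dominated: for odd $n$ the game collapses to $\{0\mid 0\}=*$, and for even $n$ to $\{*\mid *\}$, which is a second-player win and hence $0$. Either way $A_n=n\times *$, with no appeal to the switch lemma. The same substitution and domination step applied to $B_n$ collapses \emph{both} its option sets to the single common value $(n-2)\times * - 1$, giving $B_n=\{(n-2)\times *-1 \mid (n-2)\times *-1\}$.

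The crux is evaluating this last game, and this is exactly where Lemma \ref{lem:numbersPlusStar} is needed --- in \emph{both} directions, depending on the parity of $n$. When $n$ is even, $(n-2)\times *=0$, so $B_n=\{-1\mid -1\}$, which equals $-1+*=(n-1)\times *-1$ by equation $(1)$. When $n$ is odd, $(n-2)\times *=*$, so $B_n=\{-1+*\mid -1+*\}$, and now equation $(2)$ with $x=-1$ gives $B_n=-1=(n-1)\times *-1$. I expect this parity-dependent behavior to be the main obstacle: one must resist reading $\{y\mid y\}$ as $y+*$ and instead recognize, via equation $(2)$, that $\{x+*\mid x+*\}$ is the pure number $x$, the star having cancelled. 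The remaining work is purely bookkeeping --- verifying the base cases ($n=0,1$, together with $B_1=-1$ and $R_1=+1$) directly, and confirming that each domination comparison has the claimed sign.
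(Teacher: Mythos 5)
Your proposal is correct and follows essentially the same route as the paper's proof: strong induction on $n$, splitting the position at colored vertices, listing each player's options, discarding the dominated non-head moves, and evaluating the resulting $\{y \mid y\}$ games via Lemma \ref{lem:numbersPlusStar}. Your explicit parity split (equation $(1)$ when $(n-2)\times * = 0$, equation $(2)$ when it equals $*$) and the symmetry $R_n = -B_n$ are just cleaner presentations of steps the paper handles implicitly (it writes $\{y\mid y\} = y + *$ uniformly and says ``the same steps'' cover the red case), not a different argument.
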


\begin{proof}

  We proceed by strong induction on $n$:
  
  Base Case: $n = 1$.
  
  All three cases hold.  In the first case, there is only one vertex that either can play on, so $G = *$, as the lemma states.  In the second case, there is a single vertex only Right can color.  Thus, $G = -1$.  In the third case, $G = 1$ by the same logic.
  
  Inductive Case: Assume $n > 1$ and both lemmas are true for all $0 < k < n$.
  
  Consider $G$, the game played on a path with $n$ uncolored vertices.  The moves for Left can be derived from the inductive hypotheses.  Starting by coloring the head, these are: 
  \begin{align*}
    (n-1) \times *, \\
    (0 \times * - 1) + (n-2) \times * & = (n-2) \times * - 1, \\
    (1 \times * - 1) + (n-3) \times * & = (n-2) \times * - 1, \\ 
    \vdots\\ 
    ((n-3) \times * - 1) + 1 \times * & = (n-2) \times * - 1, \\
    ((n-2) \times * - 1) + 0 \times * & = (n-2) \times * - 1
  \end{align*}
  Except for the first one, these are all equal to $(n-2) \times * - 1$.  Since $(n-2)\times * - 1 < (n-1) \times *$, they are all dominated by that first move.  Thus, the best move for Left is to color the head of the path.
  
  The same logic holds to show that Right's best move is also to color the head of the path.  Thus, $G = \{(n-1) \times * | (n-1) \times *\} =  n \times *$.
  
  Now consider the case where $G$ is a path of $n$ uncolored vertices followed by a single blue vertex.  Again consider the moves for Left, starting with playing at the head and continuing down the tail: 
  \begin{align*}
    (n-2) \times * - 1, & \\
    (0 \times * - 1) + ((n-3) \times * - 1) & = (n-3) \times * - 2, \\
    (1 \times * - 1) + ((n-4) \times * - 1) & = (n-3) \times * - 2, \\
    \vdots, \\
    ((n-4) \times * - 1) + (1 \times * - 1) & = (n-3) \times * - 2, \\
    ((n-3) \times * - 1) + (0 \times * - 1) & = (n-3) \times * - 2
  \end{align*}
  All but the first simplify to $(n-3) \times * - 2$, worse for Left than the first option, $(n-2) \times * - 1$.
  
  Right has one more option: 
    \begin{align*}
      (n-2) \times * - 1,\\
      (0 \times * + 1) + ((n-3) \times * - 1) & = (n-3) \times *, \\
      (1 \times * + 1) + ((n-4) \times * - 1) & = (n-3) \times *, \\
      \vdots, \\
      ((n-3) \times * + 1) + (0 \times * - 1) & = (n-3) \times *, \\
      (n-2) \times * + 1
    \end{align*}
    All but the first and last simplify to $(n-3) \times *$.  $(n-2) \times * - 1 < (n-3) \times * < (n-2) \times * + 1$, so Right's best move is also to $(n-2) \times * - 1$.
  
  Now, 
  \begin{align*}
    G &= \{(n-2) \times * - 1 | (n-2) \times * - 1\} \\
    & = (n-2) \times * - 1 + *\\
    & = (n-1) \times * - 1
  \end{align*}
  The same steps can be used to show that a path ending in a red vertex has value $(n-1) \times * + 1$.
\end{proof}

\begin{corollary}
  On a path with $n$ uncolored vertices, it is always optimal to color the head.
\end{corollary}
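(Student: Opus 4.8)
The plan is to recognize that this corollary is exactly the \emph{domination} statement buried in the inductive step of Lemma \ref{lem:pathMyopicColValues}, so I would extract and sharpen that analysis into a standalone argument. Fix the path on $n$ uncolored vertices $u_0 \to u_1 \to \cdots \to u_{n-1}$ (arcs left-to-right, so $u_0$ is the head) and consider the player to move. I would argue the case of Left (Blue) in detail and recover the Right (Red) case from the color-swap symmetry, which negates game values.

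First I would enumerate Left's options using the arc-deletion decomposition noted just before the lemma: coloring a vertex $u_i$ blue deletes its outgoing arc $u_i \to u_{i+1}$ (that arc can impose no further constraint once $u_i$ is colored), so the position splits as a disjunctive sum of the prefix $u_0 \to \cdots \to u_i$, now ending in a blue vertex, and the suffix $u_{i+1} \to \cdots \to u_{n-1}$ of uncolored vertices. Applying Lemma \ref{lem:pathMyopicColValues} to each summand gives, for $1 \le i \le n-1$, the value $\bigl((i-1)\times * - 1\bigr) + (n-1-i)\times * = (n-2)\times * - 1$, \emph{independent of $i$}, whereas coloring the head $u_0$ leaves an isolated blue vertex (value $0$) beside a path of $n-1$ uncolored vertices, for total value $(n-1)\times *$.

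The crux is then the comparison $(n-2)\times * - 1 < (n-1)\times *$. Writing $(n-1)\times * = (n-2)\times * + *$, this reduces to the single inequality $-1 < *$, i.e. $* + 1 > 0$; and by Lemma \ref{lem:numbersPlusStar} we have $* + 1 = \{1 \mid 1\}$, whose Left and Right options are both the positive number $1$, so the game is positive. Hence every non-head option is strictly dominated by coloring the head, making the head Left's dominant move. Running the mirror computation for Right yields non-head options of value $(n-2)\times * + 1$ against a head move of value $(n-1)\times *$, and the same inequality (now read as $* < 1$) shows the head dominates for Right as well.

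The step I expect to require the most care is confirming the $*$-versus-number comparisons \emph{uniformly across the parity of $n$} rather than in a single parity class: since $(n-2)\times *$ collapses to either $0$ or $*$, one must check that the strict inequality survives adding $\pm 1$ in both cases. This is precisely where Lemma \ref{lem:numbersPlusStar} earns its keep, turning the apparent nimber-versus-number comparison into the clean switch $\{1\mid 1\} > 0$. Everything else — the arc-deletion splitting and the telescoping bookkeeping of the $*$-counts — is routine once Lemma \ref{lem:pathMyopicColValues} is granted.
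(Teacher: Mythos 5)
Your proof is correct and takes essentially the same approach as the paper: the paper's proof of this corollary simply points back to the domination analysis inside the inductive step of Lemma \ref{lem:pathMyopicColValues}, where the identical enumeration (non-head options all equal to $(n-2)\times * \pm 1$, head option equal to $(n-1)\times *$) and the identical comparison appear. Your version merely packages that analysis as a standalone argument invoking the lemma's statement rather than its proof, with the $1 + * > 0$ comparison made explicit.
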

\begin{proof}
  For all of our cases in our analysis above, the best move was always to color the vertex at the head.
\end{proof}

Now we can evaluate any path graph (or collection of paths).  The total value of the game is 

\begin{corollary}
\label{corol:pathMyopicCol}
    For a graph that consists only of paths, $G$, the total value of $M$, the \ruleset{Myopic Col} position on $G$ equals $a \times * + b - c$, where:
    \begin{itemize}
        \item $a$ is the number of uncolored vertices that either have no outgoing arc or have an outgoing arc that leads to another uncolored vertex.
        \item $b$ is the number of uncolored vertices with their outgoing arc leading to a red vertex.
        \item $c$ is the number of uncolored vertices with their outgoing arc leading to a blue vertex.
    \end{itemize}
\end{corollary}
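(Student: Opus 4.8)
The plan is to reduce an arbitrary collection of paths to the single-path forms already evaluated in Lemma \ref{lem:pathMyopicColValues}, and then to re-index the resulting sum so that each uncolored vertex is charged according to the target of its outgoing arc. First I would invoke the decomposition observation made earlier: the outgoing arc of an already-colored vertex can never forbid a future move, since an arc $(v,b)$ only constrains the coloring of its tail $v$, and $v$ is already colored. Hence every arc emanating from a colored vertex may be deleted without changing the game. Deleting all such arcs turns $G$ into a disjoint union of segments, and because the value of a disjunctive sum is the sum of the summands' values, the value of $M$ equals the sum of the values of the \ruleset{Myopic Col} games on these segments.

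Second, I would verify that each segment is exactly one of the forms covered by Lemma \ref{lem:pathMyopicColValues}. Within a segment, every vertex except the last must be uncolored, because a colored interior vertex would have had its outgoing arc deleted and would therefore terminate its segment. So each segment is a run of $n \ge 0$ uncolored vertices possibly followed by a single colored vertex. I would dispose of the degenerate case $n = 0$ (a lone colored vertex), which is terminal and has value $0$; the remaining cases ($n \ge 1$, with no end vertex, a blue end, or a red end) are precisely the three cases of Lemma \ref{lem:pathMyopicColValues}, with values $n \times *$, $(n-1)\times * - 1$, and $(n-1)\times * + 1$ respectively.

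Third comes the bookkeeping, which I expect to be the main obstacle. The idea is to attribute the value of each segment to its uncolored vertices: in a segment with no colored end each of the $n$ uncolored vertices is charged one $*$, while in a segment ending in a colored vertex the $n-1$ interior uncolored vertices are each charged one $*$ and the unique uncolored vertex adjacent to the colored end is charged $+1$ for a red end or $-1$ for a blue end. The key point to verify is that this adjacent vertex is exactly an uncolored vertex whose outgoing arc leads to a colored vertex, and that every uncolored vertex is charged exactly once and in agreement with the classes defining $a$, $b$, and $c$. This uses that in a directed path each vertex has at most one incoming arc, so no two uncolored vertices point to the same colored vertex, and that a colored vertex with no uncolored predecessor is always a lone colored vertex of value $0$, so no uncolored vertex is charged to it. Summing the charges over all segments then yields $a \times * + b - c$, matching the additive value from the second step. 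The care required is purely combinatorial: ensuring that the classes ``pointing to uncolored or to nothing,'' ``pointing to red,'' and ``pointing to blue'' partition the uncolored vertices and correspond bijectively to the per-segment charges, with the $n=0$ segments contributing nothing to any class.
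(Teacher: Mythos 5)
Your proposal is correct and follows essentially the same route as the paper's proof: delete the arcs leaving colored vertices (justified by the decomposition observation stated just after Definition \ref{def:myopicCol}), apply Lemma \ref{lem:pathMyopicColValues} to each resulting segment, and sum. You simply carry out more of the bookkeeping explicitly---in particular the degenerate lone-colored-vertex segment of value $0$ and the per-vertex charging argument---which the paper's terse proof leaves implicit.
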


\begin{proof}
    We can break each path into sections by removing edges leaving colored vertices.  By Lemma \ref{lem:pathMyopicColValues}, this splitting does not change the values of the positions at all.  Now the final formula, $a \times * + b - c$, is just the sum of all of the sums generated by using Lemma \ref{lem:pathMyopicColValues} on each piece.
\end{proof}

This formula is easy to evaluate, so we can solve this game efficiently.

\begin{corollary}[\ruleset{Myopic Col} on paths is in \cclass{P}]
    The winnability of \ruleset{Myopic Col} when played on path graphs can be determined in \cclass{P}.
\end{corollary}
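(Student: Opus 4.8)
The plan is to read off the outcome class directly from the value formula of Corollary~\ref{corol:pathMyopicCol}. Given a collection of paths, I would first scan every vertex together with its single outgoing arc and tally the three counts $a$, $b$, and $c$ defined there; this is a single pass over the representation of the position, so it clearly runs in time linear in $|V| + |E|$. By Corollary~\ref{corol:pathMyopicCol} the value of the position is then exactly $a \times * + (b - c)$, and it remains only to convert this value into an outcome.

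The first simplification is that the nimber part collapses by parity: since $* + * = 0$, we have $a \times * = 0$ when $a$ is even and $a \times * = *$ when $a$ is odd. Hence the value is either the integer $b - c$ (when $a$ is even) or the game $(b-c) + *$ (when $a$ is odd), and in either case it is determined by the single bit $a \bmod 2$ together with the integer $b - c$, both of which are computed in constant additional time.

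Next I would determine the outcome class from this reduced value. When $a$ is even the value is the number $b - c$, so the position is in \zero if $b = c$, and otherwise it is a win for a fixed player (Blue/Left if $b > c$, Red/Right if $b < c$) regardless of who moves first. When $a$ is odd and $b = c$ the value is $*$, which lies in \fuzzy, a first-player win. The only case needing a short argument is $a$ odd with $b \neq c$, where the value is $(b-c) + *$ for a nonzero integer $b-c$; here I would show that the sign of the number dominates, so the outcome matches that of $b - c$ alone. Concretely, the player favored by the sign of $b - c$ can always answer an opponent's move in the $*$-component by moving in the number and vice versa, and a direct check (or an appeal to Lemma~\ref{lem:numbersPlusStar}, which rewrites $(b-c) + *$ as the switch $\{(b-c) \mid (b-c)\}$) confirms that a nonzero integer plus $*$ is won by the same player as the integer itself.

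I expect the only real subtlety to be this last translation from value to outcome for $x + *$ with $x$ a nonzero integer; everything else is bookkeeping on the already-established formula. Since each of the three steps runs in polynomial (indeed linear) time, the winnability of any \ruleset{Myopic Col} position on a collection of paths can be decided in \cclass{P}.
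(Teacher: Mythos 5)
Your proof is correct and follows the same route as the paper: both rest entirely on the value formula of Corollary~\ref{corol:pathMyopicCol} and the observation that it can be computed in linear time. In fact you are more thorough than the paper's one-line proof, since you explicitly carry out the value-to-outcome translation (parity of $a$, sign of $b-c$, and the fact that a nonzero number plus $*$ has the same outcome as the number alone), a step the paper leaves implicit.
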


\begin{proof}
    The formula from Corollary \ref{corol:pathMyopicCol} can be evaluated in $O(n)$ time, where $n$ is the number of vertices in the graph.
\end{proof}

In the next section, we put \ruleset{Myopic Col} to work again, but on slightly more complex graphs.  The computational complexity there is not yet known.

\section{Binary Trees}
\label{section:binary-trees}

In the same way that a linked list is a directed path graph, a binary tree is a directed tree graph with at most two outgoing arcs per vertex.  Each node in the tree (each vertex) contains a value in the data structure, as well as up to two references to other nodes.

Binary trees are used to implement quickly-searchable structures on ordered data, as well as heaps, which can be used for fast priority queues.

\subsection{Ruleset: Myopic Col on Trees}

We can reuse the ruleset for \ruleset{Myopic Col} from Section \ref{sec:myopiccol}, and now consider the game played on binary trees.  Unlike on paths, \ruleset{Myopic Col} on binary trees has non integer/integer-plus-star values.  Other dyadic rationals exist, such as $1/2$ and $-3/4$.

It seems natural that in a tree with an uncolored node as the root with two uncolored nodes as children of the root, coloring the root is the best move for both players, just as for a path that begins with two uncolored vertices.

\begin{conjecture}
    Consider a directed tree $T_0$ where the root vertex, $r$, is uncolored, and the child trees of $r$, $T_1$ and $T_2$, are both non-null and have uncolored roots.  For any tree $T$, let $G(T)$ be the \ruleset{Myopic Col} position on $T$.  Then $G(T_0) = * + G(T_1) + G(T_2)$.
\end{conjecture}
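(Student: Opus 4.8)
The plan is to prove the conjecture by strong induction on the number of vertices of $T_0$, comparing the options of $G(T_0)$ against those of $* + G(T_1) + G(T_2)$ and then invoking the reversibility/simplicity machinery for canonical forms. Write $A = G(T_1)$ and $B = G(T_2)$. The first observation I would record is that coloring $r$---which is legal for \emph{either} player, since both child-roots start uncolored---sends the position to the decoupled sum $A + B$: once $r$ is colored, its outgoing arcs emanate from a colored vertex and so never again restrict a move, and those arcs were the only links between the two subtrees; since an incoming arc never restricts coloring a vertex, $T_1$ and $T_2$ become independent standalone games. Thus $A+B$ occurs both as a Left option and as a Right option of $G(T_0)$, matching the two options of $*+A+B$ that arise from playing the $*$.

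Next I would match the interior moves. Any move made strictly below the root of $T_1$ leaves $r$ uncolored with both child-roots still uncolored, so the position again satisfies the hypotheses of the conjecture on a strictly smaller tree; by the inductive hypothesis its value is $* + A' + B$, where $A'$ is the corresponding option of $A$ (symmetrically for $T_2$). The point that makes this clean is that the arc $r \to \mathrm{root}(T_1)$ points \emph{into} $T_1$, hence is irrelevant to every move inside $T_1$; consequently the moves available in the $T_1$-part of $T_0$ are \emph{identical} to those in standalone $T_1$. So every option of $* + A + B$ other than the two that color a child-root is reproduced exactly by $G(T_0)$, and conversely.

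The hard part will be the two moves that color $\mathrm{root}(T_1)$ or $\mathrm{root}(T_2)$, since these are precisely the moves that can disable $r$. After, say, Left colors $\mathrm{root}(T_1)$ blue, the vertex $r$ loses its blue option and the position no longer splits as a clean sum: the ``red-only'' $r$ stays coupled to $T_2$ (Right may still play $r$ only while $\mathrm{root}(T_2)$ is not red). My approach is to show such an option is \emph{reversible}. The resulting position has a Right option---Right immediately colors $r$ red---that decouples to $A^{L} + B$, where $A^{L}$ is the value of $T_1$ once its root is blue; I would verify $A^{L}+B \le G(T_0)$ and that reversing through it yields only options already present in $*+A+B$, with symmetric statements for $\mathrm{root}(T_2)$ and for the Right-analogues. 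Establishing these reversibility inequalities without circularity is the genuine difficulty, because the coupled ``red-only $r$ attached to $T_2$'' gadget must be shown to behave \emph{in context} like $B$ rather than like an independent summand.

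As a more robust alternative to the reversibility bookkeeping, I would fall back on the difference game, proving $G(T_0) + * - A - B = 0$ directly. Since $-*=*$ and negation is realized by swapping the two colors (Left and Right correspond to the two colors), one has $-G(T_i) = G(\bar T_i)$, so this is the sum $G(T_0) + * + G(\bar T_1) + G(\bar T_2)$, and the goal becomes a second-player winning strategy, in the spirit of the argument for Lemma~\ref{lem:numbersPlusStar}. The strategy mirrors every move inside $T_1$ into the color-swapped copy $\bar T_1$ (and $T_2$ into $\bar T_2$), keeping each pair an exact color-swap and hence summing to $0$, while pairing the lone move on $r$ with the lone move on the extra $*$. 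The one subtle case---again the crux---is when the opponent colors a child-root so as to \emph{block} $r$ entirely: here the second player must not mirror mechanically but instead spend the extra $*$ for tempo, so as never to be the player forced to move first in a solitary $*$. Checking that this tempo-correcting reply is always legal and available in every configuration of the two child-roots is the heart of the argument, and it is exactly the place where the bare ``sum of the two subtrees plus a star'' intuition must be justified rather than assumed.
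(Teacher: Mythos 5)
The first thing to say is that the paper itself offers \emph{no} proof of this statement: it is stated as a conjecture, and the binary-tree case of \ruleset{Myopic Col} is explicitly left as an open problem in the paper's final section. So there is no "paper's proof" for your argument to be compared against, and your proposal must stand on its own. It does not, and you essentially say so yourself: both of your routes are plans that halt precisely at the step you label the crux. The parts you do carry out are sound --- coloring $r$ is legal for both players (both child-roots are uncolored) and decouples the position to $A+B$, since outgoing arcs of a colored vertex never constrain anything; and moves strictly below a child-root preserve the hypotheses of the conjecture, so they match options of $A$ or $B$ by induction (though note your induction measure is wrong as stated: a move does not shrink the tree, so you must induct on the number of uncolored vertices, not on the number of vertices).

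The genuine gap is the treatment of the moves that color a child-root, and neither route closes it. In the reversibility route, you need $A^{L} + B \le G(T_0)$, i.e.\ $A^{L} \le A + *$ where $A^{L}$ is the option of $A$ coloring its root. This is not a valid identity of general game algebra: for $A = *2$ with Left option $A^{L} = *$ one has $A + * = *3$, and $*$ is confused with $*3$, not less than or equal to it. So the inequality can only come from structure specific to \ruleset{Myopic Col} positions, which you never supply; worse, since the value of $G(T_0)$ is exactly what is being determined, any appeal to $G(T_0) = * + A + B$ inside this verification is circular, as you acknowledge. In the difference-game route, the pairing $r \leftrightarrow *$ breaks irreparably in the case you flag: if the opponent colors the root of $T_1$ blue inside $T_0$ (you mirror in $\bar T_1$) and later colors the root of $T_2$ red (you mirror in $\bar T_2$), then $r$ is dead for both players while the $*$ summand is still live; the opponent then takes $*$ and your pairing strategy has no reply. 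Your proposed fix --- "spend the star for tempo" at the moment $r$ is blocked --- changes the pairing mid-game, and you never verify that a consistent pairing can always be restored afterward (for instance, if you spend $*$ when only the blue side of $r$ is blocked, the opponent may later color $r$ red and again you have no paired response). Until that case analysis is actually done, the statement remains what the paper says it is: a conjecture.
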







\section{Graphs}
\label{section:graphs}

Graphs covered in a data structures course are exactly the same as mathematical graphs with vertices, edges (or directed arcs), and values embedded in the vertices and/or edges.  There are a wide host of rulesets on graphs (\ruleset{Col}\cite{ONAG:2001}, \ruleset{Clobber}, \ruleset{Hackenbush}\cite{WinningWays:2001}, \ruleset{NimG}\cite{DBLP:journals/tcs/Fukuyama03}, and more).  Some games, such as \ruleset{Undirected Vertex Geography}, are in \cclass{P}\cite{DBLP:journals/tcs/FraenkelSU93}; others, such as \ruleset{Snort}, are \cclass{PSPACE}-complete\cite{DBLP:journals/jcss/Schaefer78}.

\section{Conclusions}

This paper presents game rulesets that rely heavily on different common data structures.  When relevant rulesets did not exist, new rulesets have been created: \ruleset{Tower Nim}, and \ruleset{Myopic Col}.  

We show polynomial-time algorithms that solve both \ruleset{Tower Nim} and \ruleset{Myopic Col} on paths.

\section{Future Work}

Three open problems exist concerning the computational difficulty of games presented here: \ruleset{Rotisserie Nim}, \ruleset{Anotnim}, and \ruleset{Myopic Col} on binary trees.

Although we have lots of results about \ruleset{Rotisserie Nim} positions, we don't yet have an efficient algorithm for the general state.

\begin{openProblem}[Rotisserie Nim]
    What is the computational complexity of \ruleset{Rotisserie Nim}?
\end{openProblem}

\ruleset{Antonim} is a classic game that has resisted a solution.  Recent work reduces the problem to a dynamic programming form\cite{arXiv:1506.01042v1}.  Unfortunately, since integers can be represented with a logarithmic number of bits for even a small number of heaps, the size of the table can be exponential in the number of bits needed to describe the game. Therefore, it remains unknown whether a polynomial-time algorithm to determine the outcome class of an \ruleset{Antonim} position exists.

\begin{openProblem}[Antonim]
    What is the computational complexity of \ruleset{Antonim}?
\end{openProblem}

For \ruleset{Myopic Col}, a polynomial-time solution exists on paths, but this does not immediately yield a solution on binary trees.

\begin{openProblem}[Binary Tree Myopic Col]
    What is the computational complexity of \ruleset{Myopic Col} played on binary trees?
\end{openProblem}

\bibliographystyle{plain}

\end{document}